\renewcommand{\paragraph}{\roman{paragraph}}
\renewcommand\title[1]{\gdef\@title{\reset@font\Large\bfseries #1}}
\renewcommand\section{\@startsection {section}{1}{\z@}%
                                   {-3.5ex \@plus -1ex \@minus -.2ex}%
                                   {2.3ex \@plus.2ex}%
                                   {\normalfont\large\bfseries}}
\renewcommand\subsection{\@startsection{subsection}{2}{\z@}%
                                     {-3ex\@plus -1ex \@minus -.2ex}%
                                     {1.5ex \@plus .2ex}%
                                     {\normalfont\normalsize\bfseries}}
\renewcommand\subsubsection{\@startsection{subsubsection}{3}{\z@}%
                                     {-2.5ex\@plus -1ex \@minus -.2ex}%
                                     {1.5ex \@plus .2ex}%
                                     {\normalfont\normalsize\bfseries}}
\def\@runningauthor{}\newcommand{\runningauthor}[1]{\def\runningauthor{#1}}
\def\@runningtitle{}\newcommand{\runningtitle}[1]{\def\runningtitle{#1}}
\renewcommand{\ps@plain}{%
\renewcommand{\@evenhead}{\footnotesize\scshape \hfill\runningauthor\hfill}
\renewcommand{\@oddhead}{\footnotesize\scshape \hfill\runningtitle\hfill}}
\newcommand{\F}{\mathbb{F}}
\newcommand{\lcm}{\rm lcm}
\newcommand {\C}{{\mathcal{C}}}
\newcommand {\supp}{{\mathrm{supp}}}
\newcommand {\Res}{{\mathrm{Res}}}
\g@addto@macro\bfseries{\boldmath}
\theoremstyle{plain}
\newtheorem{theorem}{Theorem}[section]
\newtheorem{lem}[theorem]{Lemma}
\newtheorem{cor}[theorem]{Corollary}
\theoremstyle{definition}
\newtheorem{definition}[theorem]{Definition}
\newtheorem{example}[theorem]{Example}
\newtheorem{problem}{Problem}
\theoremstyle{remark}
\newtheorem{remark}[theorem]{Remark}
\runningauthor{}
\date{}
\begin{document}
\begin{sloppypar}

\title{On Galois self-orthogonal algebraic geometry codes
\thanks{This research is supported by the National Natural Science Foundation of China under Grant No. 12171134 and U21A20428. ({\em Corresponding author: Shixin Zhu})} 
\author{Yun Ding, Shixin Zhu, Xiaoshan Kai, Yang Li}
\thanks{Yun Ding, Shixin Zhu, Xiaoshan Kai and Yang Li are with School of Mathematics, 
Hefei University of Technology, Hefei, China 
(email: yundingmath@163.com, zhushixinmath@hfut.edu.cn, kxs6@sina.com, yanglimath@163.com).}}

\maketitle

\begin{abstract}
    Galois self-orthogonal (SO) codes are generalizations of Euclidean and Hermitian SO codes.
    Algebraic geometry (AG) codes are the first known class of linear codes exceeding the Gilbert-Varshamov bound. 
    Both of them have attracted much attention for their rich algebraic structures and wide applications in these years.  
    In this paper, we consider them together and study Galois SO AG codes. 
    A criterion for an AG code being Galois SO is presented. Based on this criterion, we construct 
    a new class of maximum distance separable (MDS) Galois SO AG codes from projective lines 
    and several new classes of Galois SO AG codes from projective elliptic curves, hyper-elliptic curves and Hermitian curves.  
    In addition, we give an embedding method that allows us to obtain more MDS Galois SO codes from known MDS Galois SO AG codes.  
\end{abstract}
{\bf Keywords:} Galois self-orthogonal code, algebraic geometry code, MDS code, embedding method\\
{\bf Mathematics Subject Classification} 94B05 15B05

\section{Introduction}
Throughout this paper, let $\F_q$ be the finite field with size $q$, where $q=p^h$ and $p$ is a prime. 
Let $\F_q^*=\F_q\setminus \{0\}$. 
Denote the $n$-dimensional vector space over $\F_q$ by $\F_q^n$. 
Then a $q$-ary {\em linear code} $\C$ with length $n$, dimension $k$ and minimum distance $d$, 
denoted by $[n,k,d]_q$, is a $k$-dimensional linear subspace of $\F_q^n$ and it can correct 
up to $\lfloor \frac{d-1}{2} \rfloor$ errors. 
A linear $[n,k,d]_q$ code $\C$ is said to be {\em maximum distance sparable} (MDS) if $d=n-k+1$. 
Since MDS codes have maximum error-correcting capability for given lengths and dimensions, 
it is always an interesting problem to construct different classes of MDS codes. 

Let $e$ be an integer satisfying $0\leq e\leq h-1$. 
Let $\bm{x}=(x_1,x_2,\ldots,x_n)$ and $\bm{y}=(y_1,y_2,\ldots,y_n)$ be any two vectors in $\F_q^n$. 
Fan and Zhang \cite{Galois.1} introduced the {\rm $e$-Galois inner product} as follows:  
$$\langle  \bm{x},\bm{y} \rangle _e=\sum _{i=1}^{n}x_iy_i^{p^e},~{\rm where}~0 \leq e \leq h-1.$$
The {\em $e$-Galois dual code} of a linear $[n,k]_q$ code $\C$ is given by 
\begin{equation*}
    \C^{\perp _e}=\{\bm{y} \in \F_q^n \mid  \langle \bm{x},\bm{y} \rangle _e =0,\ \forall \  \bm{x} \in \C\} 
\end{equation*}
and $\C^{\perp_e}$ is a linear $[n,n-k]_q$ code. 
For a linear  $[n,k]_q$ code $\C$, it is called {\em $e$-Galois self-orthogonal} (SO) if $\C \subseteq \C^{\perp_e}$. 
Note that if $e=0$ (resp. $e=\frac{h}{2}$ and $h$ is even), the $0$-Galois (resp. $\frac{h}{2}$-Galois with even $h$) inner product 
coincides with the {\rm Euclidean (resp. Hermitian) inner product}. 
Hence, one can see that $e$-Galois SO codes are generalizations of Euclidean SO codes 
(i.e., $0$-Galois SO codes) and Hermitian SO codes (i.e., $\frac{h}{2}$-Galois SO codes with even $h$). 

On one hand, 
extensive researches have established strong correlations between SO codes and other mathematical topics 
such as group theory \cite{lattice }, lattice theory \cite{lattice }, modular forms \cite{forms }, 
combinatorial $t$-design theory \cite{t-design }, and (entanglement-assisted) quantum error-correcting codes \cite{quantum.2,quantum.4,quantum.5,liyang1}. 
These outstanding works have also further stimulated the study of Galois SO codes. 
Recently, the classical (extended duadic) constacyclic codes \cite{con.1,con.3,con.4,Galois.1}, skew-twisted codes \cite{con.2 }, (extended) generalized Reed-Solomon codes \cite{G.hso, jinlinfei.hso,caomeng,liyang.gso} 
and Goppa codes \cite{Goppa.2022,Goppa.2023} have been to used to construct Galois SO codes.  
For references, we list known MDS Galois SO codes in Table \ref{tab:1}. 

On the other hand, Goppa \cite{AGcodes} introduced the so-called algebraic geometry (AG) codes in 1981 
and this family of linear codes includes generalized Reed-Solomon codes and Goppa codes mentioned above as two special subclasses \cite{H.2009}. 
AG codes are very interesting since they have a good lower bound for the minimum distance 
and can exceed the well-known Gilbert-Varshamov bound \cite{GVb}.
In particular, AG codes from projective lines are MDS codes \cite{H.2009}. 
For more details on AG codes, please refer to \cite{Goppa.4, chenhao.1, chenhao.2, chenhao.3,Ls.2023 hull} and the references therein.

However, unlike generalized Reed-Solomon codes and Goppa codes, general AG codes have been studied in the literature only 
as candidates for constructing Euclidean and Hermitian SO codes. 
Specifically speaking, Stichtenoth $et~al.$ \cite{H.so} constructed Euclidean SO AG codes from a certain optimal tower. 
Jin $et~al.$ gave sufficient conditions that an AG code was equivalent to an Euclidean (resp. Hermitian) SO code \cite{jinlinfei.agc} and
constructed Euclidean SO AG codes from elliptic curves \cite{jinlinfei.nmds}.
Hernando $et~al.$ \cite{F.qso} showed that there is a large family of curves from which to obtain Euclidean SO AG codes and this large family 
includes Castle curves \cite{Castle curves}. Sok \cite{sok.2021 sd,sok.2021 hso} also presented some new constructions on Euclidean and 
Hermitian SO AG codes from algebraic curves of zero genus, one genus and higher genus.
From the discussion above, a natural problem rises as follows:  
\begin{problem}\label{prob.1}
    Are there general Galois SO AG codes?  If they exist, how should we construct them?
\end{problem}  
Motivated by this problem, we study and construct Galois SO AG codes in this paper. 
Our main contributions can be summarized as follows: 
\begin{itemize}
    \item [\rm (1)] We give a criterion for an AG code being a general Galois SO code in Lemma \ref{t1}. 
    This criterion provides a positive answer to the first part of Problem \ref{prob.1}. 
    
    \item [\rm (2)] Over projective lines,  
    we first give an embedding method in Lemma \ref{l1}, which allows us to construct more MDS Galois SO codes from known MDS Galois SO AG codes. 
    Then we present a new explicit construction of MDS Galois SO AG codes in Theorem \ref{t3}. 
    
    \item [\rm (3)] Over projective elliptic curves, hyper-elliptic curves and Hermitian curves, 
    we also get some new Galois SO AG codes with good parameters in Theorems \ref{t5}, \ref{t6} and \ref{t7}. 
    We list these new (MDS) Galois SO AG codes in Table \ref{tab:2} and hence, 
    the second part of Problem \ref{prob.1} has been fixed in some degrees. 
\end{itemize}

The paper is laid out as follows. 
In Section \ref{sec2}, we review some basic notations and concepts on algebraic function fields and AG codes. 
In Section \ref{sec 30}, we first propose a criterion to determine if an AG code is Galois SO. 
Then, we present the embedding method and show a concrete construction of MDS Galois SO AG codes over projective lines.
In addition, we also construct some new Galois SO AG codes over projective elliptic curves, hyper-elliptic curves, and Hermitian curves.
Finally, Section \ref{sec5} concludes this paper. 

\begin{table}
    \caption{Some known MDS $e$-Galois SO codes }
       \label{tab:1}       
       \begin{center}
        \resizebox{\linewidth}{!}{
           \begin{tabular}{c c c c c}
            
            \hline
             Class & Finite field $\F_q$ & Length $n$ & Dimension $k$& Reference \\\hline
             \hline



            1 & $q=p^h$ is odd, $2e \mid  h$ & $n=\frac{t(q-1)}{p^e-1}+1$, $1 \leq t \leq p^e-1$ &$1 \leq k \leq \lfloor \frac{p^e+n}{p^e+1}\rfloor  $ & \cite{caomeng}\\ \hline

            2 & $q=p^h$ is odd, $2e \mid  h$ & $\begin{array}{c} n=r_1\frac{q-1}{\gcd(q-1,x_2)}+1, 1 \leq r_1 \leq \frac{q-1}{\gcd(q-1,x_1)},\\
                (q-1)\mid  {\lcm}(x_1,x_2), 
                \frac{q-1}{p^e-1} \mid  x_1 \end{array}$ & $1 \leq k \leq \lfloor \frac{p^e+n}{p^e+1}\rfloor  $ & \cite{caomeng}\\\hline

            3 & $q=p^h$ is odd, $2e \mid  h$ & $\begin{array}{c} n=rm+1,1 \leq r \leq \frac{p^e-1}{m_1},\\
                m \mid  (q-1),m_1=\frac{m}{\gcd(m,y)},
                y=\frac{q-1}{p^e-1} \end{array}$ & $1 \leq k \leq \lfloor \frac{p^e+n}{p^e+1} \rfloor $ & \cite{caomeng}\\ \hline

            4 &$q=p^h$ is odd, $2e \mid  h$ & $\begin{array}{c} n=tp^{aw},a \mid  e, 1 \leq t \leq p^a,\\
                1 \leq w \leq \frac{h}{a}-1 \end{array} $ & $1 \leq k \leq \lfloor \frac{p^e+n-1}{p^e+1}\rfloor  $ & \cite{caomeng}\\ 
                
                \hline

            5 & $q=p^h$ is odd, $2(e-\mu ) \mid  h$ & $n=\frac{t(q-1)}{p^{e-\mu}-1}+1$, $1 \leq t \leq p^{e-\mu}-1$ &$1 \leq k \leq \lfloor \frac{p^{\mu}+n}{p^{\mu}+1}\rfloor  $ & \cite{caomeng1}\\ \hline

            6 & $q=p^h$ is odd, $2(e-\mu) \mid  h$ & $\begin{array}{c} n=r\frac{q-1}{\gcd(q-1,x_2)}+1, 1 \leq r \leq \frac{q-1}{\gcd(q-1,x_1)},\\
                    (q-1)\mid  {\lcm}(x_1,x_2), 
                    \frac{q-1}{p^{e-\mu}-1} \mid  x_1 \end{array}$ & $1 \leq k \leq \lfloor \frac{p^{\mu}+n}{p^{\mu}+1}\rfloor  $ & \cite{caomeng1}\\\hline
    
            7 & $q=p^h$ is odd, $2(e-\mu)\mid  h$ & $\begin{array}{c} n=rm+1,1 \leq r \leq \frac{p^{e-\mu}-1}{m_1},\\
                    m \mid  (q-1),m_1=\frac{m}{\gcd(m,y)},
                    y=\frac{q-1}{p^{e-\mu}-1} \end{array}$ & $1 \leq k \leq \lfloor \frac{p^{\mu}+n}{p^{\mu}+1} \rfloor $ & \cite{caomeng1}\\ \hline
    
           8 &$q=p^h$ is odd, $2(e-\mu) \mid  h$ & $\begin{array}{c} n=tp^{aw},a \mid  {(e-\mu)}, 1 \leq t \leq p^a,\\
                    1 \leq w \leq \frac{h}{a}-1 \end{array} $ & $1 \leq k \leq \lfloor \frac{p^{\mu}+n-1}{p^{\mu}+1}\rfloor  $ & \cite{caomeng1}\\ 
                    
                    \hline

            9 &    $q=p^h$ & $ \begin{array}{c} n \mid  \gcd(p^{h-e}-1,\frac{q-1}{r}), \\
                    m_0=\lfloor \frac{n-1-s_0}{2} \rfloor, s_0=\frac{p^{h-e}+1}{r},\\
                    h_0=\begin{cases} -\lfloor \frac{s_0-1}{2}\rfloor  & \mbox{if $n$ is even},\\
                        -\lfloor \frac{s_0-2}{2} \rfloor & \mbox{if $n$ is odd} \end{cases} \end{array}$ & $ \begin{array}{c} k=m-h+1,\\
                           h_0 \leq h \leq m \leq m_0 \end{array}$ & \cite{con.3} \\

                \hline

            10 & $q=p^h$ is odd, $2e \mid  h$ & $n=\frac{t(q-1)}{p^e-1}$, $1 \leq t \leq p^e-1$ & $1 \leq k \leq \frac{(t-1)(q-1)}{p^{2e}} $ & \cite{liyang.gso}\\\hline

            11 & $q=p^h$ is odd, $2e \mid  h$ & $n=\frac{t(q-1)}{p^e-1}+2$, $1 \leq t \leq p^e-1$ &$k=\frac{t(q-1)}{p^{2e}-1}$ & \cite{liyang.gso}\\\hline

            12 & $q=p^h$ is odd, $2e \mid  h$ & $n=tp^{h-e}$, $1 \leq t \leq p^e$ &$1 \leq k \leq \lfloor \frac{p^e(tp^{h-2e}+1)-1}{p^e+1}\rfloor $& \cite{liyang.gso}\\\hline

            13 & $q=p^h$ is odd, $2e \mid  h$ & $\begin{array}{c} n=tp^{h-e}+1$, $1 \leq t \leq p^e,\\
                 (p^e+1) \mid  (tp^{h-2e}+1) \end{array}$ & $k= \frac{p^e(tp^{h-2e}+1)}{p^e+1} $ & \cite{liyang.gso}\\\hline

            14 & $q=p^h$ is odd, $2e \mid  h$ & $\begin{array}{c} n=r_1r_2, r_2=\frac{q-1}{\gcd(q-1,x_2)}, 1 \leq r_1 \leq \frac{q-1}{\gcd(q-1,x_1)},\\
              (q-1)\mid  {\lcm}(x_1,x_2),
               \gcd(x_2,q-1) \mid  x_1(p^e-1) \end{array}$ & $1 \leq k \leq \frac{p^e+r_2(r_1-1)}{p^e+1} $ & \cite{liyang.gso}\\\hline

            15 & $q=p^h$ is odd, $2e \mid  h$ & $\begin{array}{c} n=r_1\frac{q-1}{\gcd(q-1,x_2)}+2, 1 \leq r \leq \frac{q-1}{\gcd(q-1,x_1)},\\
                 p^e+1 \mid  n,
                 (q-1)\mid  {\lcm}(x_1,x_2),\\
                 \gcd(x_2,q-1) \mid  x_1(p^e-1) \end{array}$ & $k=\frac{n}{p^e+1}+1$ & \cite{liyang.gso}\\ \hline

            16 & $q=p^h$ is odd, $2e \mid  h$ & $\begin{array}{c} n=rm,1 \leq r \leq \frac{p^e-1}{m_1},\\
                m \mid  (q-1),m_1=\frac{m}{\gcd(m,y)},\\
                y=\frac{q-1}{p^e-1} \end{array}$ & $1 \leq k \leq \frac{p^e+m(r-1)}{p^e+1} $ & \cite{liyang.gso}\\\hline        
                               
            17 & $q=p^h$ is odd, $2e \mid  h$ & $\begin{array}{c} n=rm+2,1 \leq r \leq \frac{p^e-1}{m_1},\\
                     m \mid  (q-1),m_1=\frac{m}{\gcd(m,y)},\\
                     y=\frac{q-1}{p^e-1},(p^e+1) \mid  n \end{array}$ & $k=\frac{n}{p^e+1}+1  $ & \cite{liyang.gso}\\

            \hline 
            
            18 & $q=p^h$ is odd &$\begin{array}{c} n=wp^{mz} , 1 \leq w \leq p^m, 1 \leq z \leq \frac{h}{m}-1,\\
                2^t \mid \frac{h}{m}, 2^t=p^e+1 \end{array}$ & $1 \leq k \leq \lfloor \frac{n+p^e-1}{p^e+1}\rfloor $ & \cite{liyang.MDS} \\ \hline

            19 & $q=p^h$ is even & $n < q$, $\frac{h}{{\rm gcd}(e,h)}$ is odd and $h >1$ &  $1 \leq k \leq \lfloor \frac{n+p^e-1}{p^e+1}\rfloor $ & \cite{Q}\\\hline

            20 & $q=p^h>3$ & $n \leq r$, $r=p^\epsilon $ with $\epsilon \mid h$ and $(p^e+1) \mid \frac{q-1}{r-1}$ & $1 \leq k \leq \lfloor \frac{n+p^e-1}{p^e+1}\rfloor $ &\cite{Q}\\\hline

            21 & $q=p^h>3$ & $n \mid q$ &  $1 \leq k \leq \lfloor \frac{n+p^e-1}{p^e+1}\rfloor $ & \cite{Q}\\\hline

            22 & $q=p^h >3$ &$(n-1)\mid (q-1)$, $n >1$ and $p\mid n$ & $1 \leq k \leq \lfloor \frac{n+p^e-1}{p^e+1}\rfloor $ & \cite{Q}\\\hline
           
            23 & $q=p^h>3$ & $n=2n' \leq q$ and $n' \mid q$  &  $1 \leq k \leq \lfloor \frac{n+p^e-1}{p^e+1}\rfloor $ & \cite{Q}\\\hline
           
            24 & $q=p^h >3$ & $\begin{array}{c} n=n't \leq q, n'=r^\mu\  {\rm and}\  r=p^\epsilon\  {\rm with}\  \epsilon \mid h,\\
                {\rm where}\  1 \leq t \leq r, 1 \leq \mu \leq \frac{h}{\epsilon}, {\rm gcd}(p^e+1, q-1) \mid \frac{q-1}{r-1} \end{array}$& $1 \leq k \leq \lfloor \frac{n+p^e-1}{p^e+1}\rfloor $ & \cite{Q}\\

            \hline
        \end{tabular}}
    \end{center}
\end{table}

\begin{table}
    \caption{New (MDS) $e$-Galois SO AG codes} 
       \label{tab:2}       
       \begin{center}
        \resizebox{\linewidth}{!}{
           \begin{tabular}{ c c c c}
           
            \hline
            Finite field $\F_q$ & $n$ & Parameters & Reference \\\hline
            \hline

             $q=p^h$ is odd, $2e \mid h$ & $n=(t+1)\frac{q-1}{p^e+1}+1 $, $1 \leq t \leq p^e$, $(p^{2e}-1) \mid \frac{q-1}{p^e+1}$ & $\begin{array}{c}[n,k,n-k+1]_q, \\  1 \leq k \leq \lfloor \frac{(t+1)(q-1)+p^e(p^e+1)}{(p^e+1)^2}\rfloor\end{array}$ & Theorem \ref{t3}\\
             \hline
             
             $q=p^h$ is even, $2e \mid h$ & $\begin{array}{c} E=\{x^{2^e+1} \vert x \in \F_q^*\},\\ 1 \leq n \leq | U_1 |,
                U_1=\{\alpha \in E \vert Tr_{\F_q/\F_2}(\alpha ^3)=0\},\\
                {\rm or} \  1 \leq n \leq | U_2 |,
                U_2=\{\alpha \in E \vert Tr_{\F_q/\F_2}(\alpha +\alpha ^3)=0\}
            \end{array}$ & $\begin{array}{c}[2n,k-1, \geq 2n-k+1]_q, \\ 3 \leq k \leq \lfloor \frac{2n+p^e+1}{p^e+1} \rfloor\end{array}$  & Theorem \ref{t5}\\
                \hline
             $q=p^h$ is even, $h$ is even  & $(n-1) \mid (q-1)$ & $\begin{array}{c}[2n, k-\frac{\sqrt{q}}{2}, \geq 2n-k+1]_q, \\ 1+\sqrt{q} \leq k\leq \lfloor \frac{2n+p^e+\sqrt{q}-1}{p^e+1} \rfloor\end{array}$  & Theorem \ref{t6}\\
             \hline
             $q=p^h$ is odd, $2e \mid h$ & $ \begin{array}{c} n=tp^{aw}, a \mid e, 1 \leq t \leq p^a, 1 \leq w \leq \frac{h}{a}-1,\\
                {\rm or}\  n=tp^{h-e}, 1 \leq t \leq p^e,\\ {\rm or}\  n=(t+1)\frac{q-1}{p^e+1}+1, 1 \leq t \leq p^e, (p^{2e}-1) \mid \frac{q-1}{p^e+1} \\
                {\rm or}\  n=\frac{t(q-1)}{p^e-1}+a, a=\{0,1\}, 1 \leq t \leq p^e-1, \\
                {\rm or}\  n=\frac{r(q-1)}{\gcd(q-1,x_2)}+a, a=\{0,1\}, 1 \leq r \leq  \frac{q-1}{\gcd(q-1,x_1)}, \\ (q-1) \mid  {\lcm}(x_1,x_2), {\gcd}(x_2,q-1) \mid  x_1(p^e-1),\\
                {\rm or}\  n=rm+a,  a=\{0,1\}, m\mid (q-1), \\ 1 \leq r \leq \frac{p^e-1}{m_1}, m_1=\frac{m}{\gcd(m,y)}, y=\frac{q-1}{p^e-1} \end{array}$ 
                 & $\begin{array}{c}[\sqrt{q}n,k-\frac{q-\sqrt{q}}{2},\geq \sqrt{q}n-k+1]_q, \\ 1+q-\sqrt{q} \leq k \leq \lfloor \frac{\sqrt{q}n+p^e+q-\sqrt{q}-1}{p^e+1} \rfloor\end{array}$ & Theorem \ref{t7}\\

            \hline
        \end{tabular}}
    \end{center}
\end{table}

\section{Preliminaries}\label{sec2}
In this section, we review some basic notations and 
concepts on algebraic function fields and algebraic geometry codes.
\subsection{Algebraic function fields} 
In this subsection, we briefly introduce some concepts of algebraic function fields. 
For more details, the reader is referred to \cite{H.2009}. 

Let $\F_q(z)$ be the rational function field, where $z$ is a {\em transcendental element} over $\F_q$. 
Let $\mathcal{X}$ be a smooth projective curve with {\em genus} $g$ over $\F_q$. 
The field of rational functions of $\mathcal{X}$ is denoted by $\F_q(\mathcal{X})$. 
The field of functions of algebraic curves over $\F_q$ can be seen as a {\em finitely separable extension} of $\F_q(z)$. 
We use {\em places} $P$ of the function field $\F_q(z)$ to identify points on the curve $\mathcal{X}$.
Denote by $\mathbb{P}_{\F}$ the set of all places of $\F_q(z)$. Then
\begin{align*}
    \mathbb{P}_{\F}=\{P_{p(z)} | \  p(z)\ {\rm is\  a\  monic\  irreducible\  polynomial\ over }\ \F_q \} \cup P_{\infty},
\end{align*}
where $P_{p(z)}$ and $P_{\infty}$ were defined in \cite[Page 9]{H.2009}.
The {\em degree} of the place $P_{p(z)}$ is equal to the degree of $p(z)$ and the degree of $P_{\infty}$ is equal to 1 (see \cite[Definition 1.1.4]{H.2009}). 
If $P_{z(x)}=P_{z-\alpha}$, where $\alpha \in \F_q$, we call the places $P_{z(x)}$ and $P_{\infty}$ {\em rational places} of $\F_q(z)$. 
If all the coordinates of a point on $\mathcal{X}$ belong to $\F_q$, it is called a {\em rational point}. 
We denote the set of $\F_q$-rational points of $\mathcal{X}$ by $N(\mathcal{X})$.

Let $\mathbb{Z}$ be the {\em integer ring}. Then a {\em divisor} $G$ of $\F_q(\mathcal{X})$ can be defined as follows. 
\begin{definition}{\rm(\!\!\cite[Definition 1.1.4]{H.2009})}.
    A {\em divisor} $G$ of $\F_q(\mathcal{X})$ is a formal sum $G=\sum _{P \in \mathcal{X}}n_P P$ with $n_P \in \mathbb{Z}$ 
    and almost all $n_P=0$. The {\em support} and {\em degree} of $G$ are respectively given by  
    \begin{equation}
        \supp(G)=\{P \in \mathcal{X}\mid  n_P \neq 0\}~{\rm and}~\deg(G)= \sum _{P \in \mathcal{X}}n_P \deg(P).
    \end{equation}
\end{definition}

\begin{definition}{\rm(\!\!\cite[Definition 1.1.4]{H.2009})}.
    For two divisors $G=\sum_{P \in \mathcal{X}}n_P P$ and $H=\sum_{P \in \mathcal{X}}m_P P$, 
    we have $G \leq H$ if and only if $n_P \leq m_P$ for all places $P \in \mathcal{X}$.
\end{definition}

For every non-zero rational function $z \in \F_q(\mathcal X)$, the {\em principal divisor} of $z$ is 
    $(z)=\sum_{P \in \mathbb{P}_{\F}}v_P(z)P,$
where $v_P(\cdot )$ is the {\em discrete valuation} at the place $P$. 
We denote the set of zeros (resp. poles) of $(z)$ by $S$ (resp. $R$), then $(z)$ can be uniquely written as $(z)=\sum_{P \in S}v_P(z)P-\sum_{Q \in R}v_Q(z)Q$, where $v_P(z) >0$ for any $P \in S$ and $v_Q(z)>0$ for any $Q \in R$.
The divisor $\sum_{P \in S}v_P(z)P$ is referred to as the {\em zero divisor} of $(z)$, denoted by $(z)_0$.
The divisor $\sum_{Q \in R}v_Q(z)Q$ is referred to as the {\em pole divisor} of $(z)$, denoted by $(z)_{\infty}$.
As a result, every principal divisor can be uniquely expressed as $(z)=(z)_0-(z)_{\infty}$.
It can also be seen that $\deg((z)_0)=\deg((z)_{\infty})$ (see \cite[Theorem 1.4.11]{H.2009}),
then all degree of principal divisors are equal to zero.

For two divisors $G$ and $H$ on the curve $\mathcal{X}$, if $G=H+(z)$ for some rational function $z \in \F_q(\mathcal X)$,  
they are said to be {\em equivalent}.

For a divisor $G$ on the curve $\mathcal{X}$, the {\em Riemann-Roch space associated to $G$}, denoted by $\mathcal{L}(G)$, is defined by 
\begin{equation}
    \mathcal{L}(G)=\{z \in \F_q(\mathcal{X}) \setminus \{0\}\mid (z)+G \geq 0\} \cup \{0\}.
\end{equation} 
Clearly, $\mathcal{L}(G)$ is a finite dimensional vector space over $\F_q$  
and we denote $\ell(G)=\dim(\mathcal{L} (G))$, where $\dim(\mathcal{L} (G))$ is the dimension of $\mathcal{L} (G)$. 
Then $\ell(G)=0$ if $\deg(G) < 0$.
Let $w=zdx$ be a {\em Weil differential} of $\F_q(\mathcal{X})$, where $z \in \F_q(\mathcal{X})$. 
We call $(w)=\sum_{P \in \mathcal{X}}v_P(w) P$ a {\em canonical divisor} if $w\neq 0$ and denote the canonical divisor by $W=(w)$.
From \cite[Corollary 1.5.16]{H.2009}, we have $\deg(W)=2g-2$, where $g$ is the genus of the curve $\mathcal{X}$. 
Now, we can use the following well-known Riemann-Roch's theorem to calculate $\ell(G)$.
\begin{lem}{\rm (Riemann-Roch's theorem \cite[Theorem 1.5.15]{H.2009})}.
    Let $W$ be a canonical divisor of $\F_q(\mathcal{X})$. Then for each divisor $G$, we have 
    \begin{equation*}
        \ell(G)=\deg(G)+1-g+\ell(W-G), 
    \end{equation*}
    where $g$ is the genus of $\mathcal{X}$. 
\end{lem}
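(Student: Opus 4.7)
The plan is to follow the standard route via Weil differentials and the adele-theoretic duality, as laid out in \cite{H.2009}. First I would establish Riemann's inequality: for any divisor $G$ on $\mathcal{X}$, one has $\ell(G) \geq \deg(G) + 1 - g$, where the genus $g$ is defined intrinsically as
\begin{equation*}
g = \max\{\deg(G) - \ell(G) + 1 : G \text{ a divisor of } \F_q(\mathcal{X})\}.
\end{equation*}
This maximum is finite, and the proof typically proceeds by fixing a transcendental element $x \in \F_q(\mathcal{X})$, comparing $\mathcal{L}(mP_\infty)$ to $\F_q(x)$-vector space decompositions, and using the fact that $[\F_q(\mathcal{X}) : \F_q(x)]$ is finite. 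From this one extracts the integer $g \geq 0$ and defines the \emph{index of specialty} $i(G) := \ell(G) - \deg(G) - 1 + g \geq 0$, so that the theorem becomes the identity $i(G) = \ell(W - G)$.

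Next I would introduce the space of adeles (repartitions) $\mathcal{A}_{\F_q(\mathcal{X})} = \prod'_{P} \F_{q,P}$ and, for each divisor $G$, the subspace $\mathcal{A}(G) = \{\alpha \in \mathcal{A}: v_P(\alpha_P) \geq -v_P(G) \text{ for all } P\}$. The function field embeds diagonally into $\mathcal{A}$, and a counting argument shows that
\begin{equation*}
\dim_{\F_q}\bigl(\mathcal{A}/(\mathcal{A}(G) + \F_q(\mathcal{X}))\bigr) = i(G).
\end{equation*}
I would then define a \emph{Weil differential} as an $\F_q$-linear functional on $\mathcal{A}$ that vanishes on $\mathcal{A}(G) + \F_q(\mathcal{X})$ for some $G$, and endow the space $\Omega$ of Weil differentials with an $\F_q(\mathcal{X})$-module structure via $(z\omega)(\alpha) := \omega(z\alpha)$. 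The key structural fact to establish is that $\Omega$ is a one-dimensional vector space over $\F_q(\mathcal{X})$; this typically follows from an approximation/strong approximation argument combined with Riemann's inequality.

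With $\Omega$ one-dimensional, any nonzero $\omega$ has a well-defined divisor $(\omega) = \sum_P v_P(\omega) P$, where $v_P(\omega)$ is the largest integer $n$ such that $\omega$ vanishes on $\mathcal{A}(nP)$ locally; different choices of $\omega$ yield equivalent divisors, so $W = (\omega)$ is well-defined up to linear equivalence. The final and most delicate step is to exhibit an explicit $\F_q$-linear isomorphism $\mathcal{L}(W - G) \cong \{\omega \in \Omega : \omega \text{ vanishes on } \mathcal{A}(G) + \F_q(\mathcal{X})\}$, sending $z \mapsto z\omega_0$ for a fixed $\omega_0$ with $(\omega_0) = W$. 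Combined with the adelic identity above, this gives $\ell(W-G) = i(G)$, which rearranges to the Riemann-Roch formula. I expect the main obstacle to be the one-dimensionality of $\Omega$ over $\F_q(\mathcal{X})$, since this underpins the canonical class and requires the full strength of the approximation theorem for places; once it is in hand, computing $\deg(W) = 2g - 2$ by applying the formula to $G = 0$ and $G = W$ is automatic.
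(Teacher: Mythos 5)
The paper does not prove this lemma; it imports it verbatim from Stichtenoth \cite[Theorem 1.5.15]{H.2009}, and your outline is precisely the adelic proof given in that reference: Riemann's inequality with the genus defined as $\max\{\deg(G)-\ell(G)+1\}$, the identification of the index of specialty $i(G)$ with $\dim_{\F_q}\bigl(\mathcal{A}/(\mathcal{A}(G)+\F_q(\mathcal{X}))\bigr)$, the one-dimensionality of the space of Weil differentials over $\F_q(\mathcal{X})$, and the isomorphism $\mathcal{L}(W-G)\cong\Omega(G)$ via $z\mapsto z\omega_0$. Your sketch is correct and follows essentially the same route as the paper's cited source, so there is nothing to add.
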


\subsection{Algebraic geometry codes} 

Let $\Omega = \{zdx\mid z \in \F_q(\mathcal{X})\}$ be the set of all Weil differentials of $\F_q(\mathcal{X})$.
For a divisor $G$, we define 
\begin{equation}
    \Omega (G)=\{w \in \Omega  \backslash  \{0\}\mid (w)-G \geq 0\} \cup \{0\}.
\end{equation}
Then it can be shown that $\Omega(G)$ is a finite dimensional vector space 
and we can define the {\em algebraic geometry (AG) code} (see \cite[Definition 2.2.1]{H.2009}) and the {\em differential geometry (DG) code} (see \cite[Definition 2.2.6]{H.2009})  
as follows. 
\begin{definition}\label{d2}
    Let $\mathcal{X}$ be a smooth projective curve.
    Let $P_1, P_2, \ldots , P_n$ be $n$ pairwise distinct rational points on $\mathcal{X}$.  
    Suppose that $G$ and $D=P_1+P_2+\cdots+P_n$ are two divisors such that $\supp(D) \cap \supp(G) = \emptyset $. 
    \begin{enumerate}
        \item[\rm(1)] The {\em algebraic geometry (AG) code} $C_{\mathcal{L}}(D,G)$ associated with the divisors $D$ and $G$ is the image of the linear map 
        $ev_{\mathcal{L}}: \mathcal{L}(G) \to \F_q^n$, $z \mapsto (z(P_1), z(P_2),\ldots,z(P_n))$, i.e., 
        \begin{align}
            C_{\mathcal{L}}(D,G)=\{(z(P_1), z(P_2), \ldots,z(P_n))\mid z \in \mathcal{L}(G)\}.
        \end{align}
        Moreover, for any $\bm{v}=(v_1,v_2,\ldots,v_n)$, where $v_i\in \F_q^*$ for each $1\leq i\leq n$, 
        the code $C_{\mathcal{L}}(D,G,\bm{v})$ is called a {\em generalized algebraic geometry code}, defined by
        \begin{align}
            C_{\mathcal{L}}(D,G,\bm{v})=\{(v_1z(P_1), v_2z(P_2), \ldots,v_nz(P_n))\mid z \in \mathcal{L}(G)\}.
        \end{align}

        \item[\rm(2)] The {\em differential geometry (DG) code} $C_{\Omega }(D,G)$ associated with $D$ and $G$ is the image of the linear map
        $ev_{\Omega}: \Omega(G-D) \to \F_q^n$, $w \mapsto ({\Res}_{P_1}(w), {\Res}_{P_2}(w), \ldots ,{\Res}_{P_n}(w))$, i.e., 
        \begin{align}
            C_{\Omega }(D,G)=\{({\Res}_{P_1}(w),{\Res}_{P_2}(w),\ldots,{\Res}_{P_n}(w))\mid w \in \Omega(G-D) \},
        \end{align}
        where $\Res_{P_i}(w) $ is the residue of $w$ at the point $P_i$.
    \end{enumerate}
\end{definition}
For convenience, we refer to both $C_{\mathcal{L}}(D,G)$ and $C_{\mathcal{L}}(D,G,\bm{v})$ as AG codes in the sequel. 
Then the following lemma gives the parameters of an AG code $C_{\mathcal{L}}(D,G,\bm{v})$.

\begin{lem}{\rm (\!\!\cite[Theorem 2.2.2]{H.2009})}.\label{p.parameters}
   Let $\mathcal{X}$ be a smooth projective curve over $\F_q$ with genus $g$. 
   Then the AG code $C_{\mathcal{L}}(D,G,\bm{v})$ is an $[n,k',d]$ code, 
   where  $k'=\ell(G)-\ell(G-D)$ and $d \geq n-\deg(G)$.
   Moreover, if $2g-2 < \deg(G) < n$, then $k'=\deg(G)+1-g$.
\end{lem}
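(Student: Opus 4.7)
The plan is to analyze the evaluation map $ev_{\mathcal{L}}: \mathcal{L}(G) \to \F_q^n$ sending $z \mapsto (v_1 z(P_1), v_2 z(P_2), \ldots, v_n z(P_n))$. The length claim $n$ is immediate from the definition, so I would focus on computing the dimension of the image and bounding the minimum distance via divisor degree arguments, and finally invoking Riemann-Roch to get the simplified formula in the generic range $2g-2<\deg(G)<n$.

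For the dimension, I would first check that $ev_{\mathcal{L}}$ is $\F_q$-linear (clear) and then determine its kernel. Because $\supp(D)\cap \supp(G)=\emptyset$, any $z\in \mathcal{L}(G)$ is regular at each $P_i$, so $z(P_i)$ is well defined. Since $v_i\in \F_q^*$, the condition $v_iz(P_i)=0$ is equivalent to $z(P_i)=0$. Thus $z$ lies in $\ker(ev_{\mathcal{L}})$ iff $z$ vanishes at every $P_i$, equivalently $(z)+G-D\geq 0$, i.e., $z\in \mathcal{L}(G-D)$. By rank-nullity,
\begin{equation*}
    \dim C_{\mathcal{L}}(D,G,\bm{v})=\ell(G)-\ell(G-D)=k'.
\end{equation*}

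For the minimum distance, I would take a nonzero codeword $c=(v_1z(P_1),\ldots,v_nz(P_n))$ of weight $w$ with $z\in \mathcal{L}(G)\setminus\{0\}$. Let $T\subseteq \{1,\ldots,n\}$ be the set of $n-w$ indices at which $c$ vanishes and set $D'=\sum_{i\in T}P_i$. Then $z(P_i)=0$ for $i\in T$, hence $z\in \mathcal{L}(G-D')\setminus\{0\}$, so $\ell(G-D')\geq 1$. This forces $\deg(G-D')\geq 0$, that is, $n-w\leq \deg(G)$, giving $w\geq n-\deg(G)$, which yields $d\geq n-\deg(G)$.

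For the last assertion, under $2g-2<\deg(G)<n$, I would apply the Riemann-Roch theorem stated above. Since $\deg(W-G)<0$, the term $\ell(W-G)$ vanishes and $\ell(G)=\deg(G)+1-g$; similarly $\deg(G-D)=\deg(G)-n<0$ forces $\ell(G-D)=0$, so $k'=\deg(G)+1-g$. The only real subtlety, and the main place to be careful, is the kernel computation: one must use both that the $v_i$ are nonzero (so they contribute no extra constraints) and that $\supp(G)\cap\supp(D)=\emptyset$ (so evaluating $z$ at the $P_i$ is meaningful and the vanishing order at $P_i$ is captured cleanly by the divisor inequality). Everything else is a straightforward translation between vanishing conditions and divisor inequalities, together with a direct application of Riemann-Roch.
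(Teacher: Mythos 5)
Your proof is correct and is essentially the standard argument from the cited source (Stichtenoth, Theorem 2.2.2), which the paper itself invokes without reproving: kernel computation via $\mathcal{L}(G-D)$ and rank–nullity for the dimension, the vanishing-divisor degree bound for the distance, and Riemann–Roch with $\deg(W-G)<0$ and $\deg(G-D)<0$ for the final formula. No gaps.
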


\begin{lem}{\rm (\!\!\cite[Proposition 2.1.10]{H.2009})}.\label{ralation}
    With notations as above,  
    let $C_{\mathcal{L}}(D,G,\bm{v})$ and $C_{\mathcal{L}}(D,H,\bm{v})$ be two AG codes satisfying $G \leq H$. 
    Then $C_{\mathcal{L}}(D,G, \bm{v}) \subseteq C_{\mathcal{L}}(D,H,\bm{v})$.
\end{lem}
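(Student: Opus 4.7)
The plan is to reduce the claimed inclusion of codes to an inclusion of the associated Riemann--Roch spaces, and then push that through the evaluation map $ev_\mathcal{L}$. The only real content is showing $\mathcal{L}(G) \subseteq \mathcal{L}(H)$ whenever $G \le H$, after which everything else follows mechanically from Definition~\ref{d2}.

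First I would unpack the ordering $G \le H$ coefficient-wise: writing $G = \sum_P n_P P$ and $H = \sum_P m_P P$, the hypothesis states $n_P \le m_P$ for every place $P$, equivalently $H - G \ge 0$. I would then pick an arbitrary nonzero $z \in \mathcal{L}(G)$, so by definition $(z) + G \ge 0$. Adding the nonnegative divisor $H - G$ to both sides preserves the inequality, giving
\begin{equation*}
(z) + H = (z) + G + (H - G) \ge 0,
\end{equation*}
so $z \in \mathcal{L}(H)$. Together with the trivial case $z = 0$, this establishes $\mathcal{L}(G) \subseteq \mathcal{L}(H)$.

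Next I would transfer the inclusion to the codes themselves. A generic element of $C_{\mathcal{L}}(D,G,\bm{v})$ has the form $(v_1 z(P_1), v_2 z(P_2), \ldots, v_n z(P_n))$ for some $z \in \mathcal{L}(G)$. Since $\mathcal{L}(G) \subseteq \mathcal{L}(H)$ by the previous step, this same function $z$ lies in $\mathcal{L}(H)$, so the tuple is also a valid image under $ev_\mathcal{L}$ restricted to $\mathcal{L}(H)$ with weights $\bm{v}$. Hence the codeword sits in $C_{\mathcal{L}}(D,H,\bm{v})$, completing the proof.

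There is no genuine obstacle here; the statement is essentially a monotonicity remark and the proof is a one-line divisor manipulation followed by applying the definition. The only subtlety worth flagging is checking that the disjoint-support condition $\supp(D) \cap \supp(G) = \emptyset$ (and the analogous one for $H$) is assumed as part of the setup, so the evaluation map is well-defined on both $\mathcal{L}(G)$ and $\mathcal{L}(H)$; no new case analysis is needed because it is inherited from the hypothesis that both $C_{\mathcal{L}}(D,G,\bm{v})$ and $C_{\mathcal{L}}(D,H,\bm{v})$ are bona fide AG codes in the sense of Definition~\ref{d2}.
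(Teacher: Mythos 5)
Your proof is correct and is exactly the standard argument: the paper does not prove this lemma itself but cites it from Stichtenoth (Proposition 2.1.10), and the cited proof is precisely your reduction $G \le H \Rightarrow \mathcal{L}(G) \subseteq \mathcal{L}(H)$ followed by applying the (weighted) evaluation map. No issues.
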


\begin{lem}{\rm (\!\!\cite[Proposition 2.2.10]{H.2009})}.\label{p0}
    Let notations be the same as above.
    Let $w$ be a Weil differential such that $v_{P_i}(w)=-1$ and ${\Res}_{P_i}(w) \neq 0$ for each $1\leq i\leq n$. 
    Suppose that $\bm{v}=(v_1,v_2,\ldots ,v_n)\in (\F_q^*)^n$. Then we have  
    \begin{align}
        {C_{\mathcal{L}}(D,G,\bm{v})}^{\perp_0} = \bm{v}^{-1} C_{\Omega}(D,G)= \bm{v}^{-1} \bm{\Res_P(w)} C_{\mathcal{L}}(D,H),
    \end{align}
    where $H=D-G+(w)$, $\bm{v}^{-1}=(v_1^{-1}, v_2^{-1}, \ldots, v_n^{-1})$ and $\bm{\Res_P(w)}=({\Res_{P_1}}(w),\Res_{P_2}(w), \ldots,{\Res_{P_n}}(w))$.
\end{lem}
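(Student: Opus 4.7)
The plan is to split the identity into two parts and prove each separately: the first equality $C_{\mathcal{L}}(D,G,\bm{v})^{\perp_0}=\bm{v}^{-1}C_{\Omega}(D,G)$ is the Euclidean functional-duality statement between a twisted AG code and the associated differential code, while the second equality $C_{\Omega}(D,G)=\bm{\Res_P(w)}\,C_{\mathcal{L}}(D,H)$ translates the differential code back into a function code using the specific Weil differential $w$.

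For the first equality, I would take arbitrary $z\in\mathcal{L}(G)$ and $w'\in\Omega(G-D)$. The defining inequalities $(z)+G\geq 0$ and $(w')-(G-D)\geq 0$ combine into $(zw')+D\geq 0$, so $zw'$ has at worst simple poles, all supported on $\{P_1,\dots,P_n\}$. The residue theorem then yields $\sum_{i=1}^{n}\Res_{P_i}(zw')=0$, and since the pole order at each $P_i$ is at most one, this equals $\sum_i z(P_i)\Res_{P_i}(w')=0$. Re-inserting the twist gives $\sum_i (v_iz(P_i))(v_i^{-1}\Res_{P_i}(w'))=0$, proving the inclusion $\bm{v}^{-1}C_{\Omega}(D,G)\subseteq C_{\mathcal{L}}(D,G,\bm{v})^{\perp_0}$. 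Equality then follows from a dimension count, using Lemma \ref{p.parameters} together with the fact that componentwise multiplication by $\bm{v}^{-1}\in(\F_q^{\ast})^n$ is a linear isomorphism: both sides have dimension $n-k'$, where $\dim C_{\Omega}(D,G)=n-k'$ is obtained by applying Riemann-Roch to the divisor $G-D$.

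For the second equality, I would exhibit the explicit $\F_q$-linear isomorphism $\varphi:\mathcal{L}(H)\to\Omega(G-D)$ defined by $\varphi(z)=zw$. Because $H=D-G+(w)$, the condition $(z)+H\geq 0$ rearranges to $(zw)-(G-D)\geq 0$, so $\varphi$ is well defined, and its inverse is $w'\mapsto w'/w$. Composing with $ev_{\Omega}$ requires the local identity $\Res_{P_i}(zw)=z(P_i)\Res_{P_i}(w)$, and this is exactly where the hypotheses on $w$ enter: from $v_{P_i}(w)=-1$, $v_{P_i}(D)=1$ and $v_{P_i}(G)=0$ one computes $v_{P_i}(H)=0$, so every $z\in\mathcal{L}(H)$ is regular at $P_i$ and $z(P_i)$ is a well-defined scalar, while the simple pole of $w$ with $\Res_{P_i}(w)\neq 0$ yields the clean product formula for residues. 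Thus $ev_{\Omega}\circ\varphi$ is precisely the componentwise product of $\bm{\Res_P(w)}$ with $ev_{\mathcal{L}}$ restricted to $\mathcal{L}(H)$, and the claimed equality of codes drops out.

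The main obstacle is the careful bookkeeping of divisor inequalities and local valuations at the evaluation points, crucially combining the standing assumption $\supp(G)\cap\supp(D)=\emptyset$ with the hypotheses $v_{P_i}(w)=-1$ and $\Res_{P_i}(w)\neq 0$. These conditions are exactly what force each $P_i$ to be a simple pole both of $zw'$ (so that the residue theorem collapses to the desired inner product in Step~1) and of $w$ itself (so that the residue-is-a-product identity holds in Step~2). Once the local picture is set up, the global conclusion reduces to dimension counting via Lemma \ref{p.parameters} and Riemann-Roch.
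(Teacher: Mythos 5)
The paper does not prove this lemma; it is quoted directly from Stichtenoth (\cite[Proposition 2.2.10]{H.2009}, combined with the duality theorem $C_{\mathcal{L}}(D,G)^{\perp_0}=C_{\Omega}(D,G)$). Your argument is correct and is essentially the standard proof from that reference: the residue theorem applied to $z w'$ plus a dimension count via Riemann--Roch for the first equality, and the isomorphism $\mathcal{L}(H)\to\Omega(G-D)$, $z\mapsto zw$, together with the local identity $\Res_{P_i}(zw)=z(P_i)\Res_{P_i}(w)$ (valid since $v_{P_i}(H)=0$ forces $z$ regular at $P_i$) for the second.
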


Put $\bm{v}=(v_1,v_2,\ldots,v_n)$, where $v_i\in \F_q^*$ for each $1\leq i\leq n$ 
and ${\bm{\alpha}} =(\alpha _1,\alpha _2,\ldots,\alpha_n)$, where $\alpha_i,\alpha_j\in \F_q$ 
satisfying $\alpha_i\neq \alpha_j$ for each $1\leq i \neq j\leq n$. 
Let $h(x)=\prod _{i=1}^n(x-\alpha _i)$ and $P_i=P_{x-\alpha _i}$ be the rational points 
corresponding to the irreducible polynomial $x-\alpha _i$ for each $1\leq i \leq n$.
The {\em generalized Reed-Solomon (GRS) code} $GRS_k(\bm{\alpha},\bm{v})$ associated to $\bm{\alpha}$ and $\bm{v}$ 
is defined by 
\begin{equation}\label{eq.GRS}
    GRS_k(\bm{\alpha},\bm{v})=\{(v_1f(\alpha _1), v_2f(\alpha _2),\ldots,v_nf(\alpha _n))\mid f(x) \in \F_q(x),\deg(f(x)) \leq k-1\}. 
\end{equation}
Obviously, $GRS_k(\bm{\alpha} ,\bm{v})$ defined in Equation (\ref{eq.GRS}) is an $[n,k,n-k+1]_q$ MDS code. 
\begin{lem}{\rm (\!\!\cite[Proposition 2.3.3]{H.2009})}.\label{lem.generator matrix}
    Let $\mathcal{X}$ be a smooth projective line and other notations be the same as before. 
    Suppose that $D=P_1+P_2+\cdots+P_n$ and $G=(k-1)P_\infty$ such that $\supp(D) \cap \supp(G) = \emptyset $. 
    Then the AG code $C_{\mathcal{L}}(D,G,\bm{v})$ is equal to the GRS code $GRS_k(\bm{\alpha},\bm{v})$. 
    Moreover, a generator matrix of $C_{\mathcal{L}}(D,G,\bm{v})$ is given by 
    \begin{equation}
        \left[\begin{array}{cccc}
            v_1 & v_2 & \cdots & v_n \\
            v_1\alpha _1 & v_2\alpha _2 & \cdots & v_n\alpha _n \\
            \vdots & \vdots & \ddots & \vdots\\
            v_1\alpha _1^{k-1} & v_2\alpha _2^{k-1} & \cdots & v_n\alpha _n^{k-1} \\
        \end{array}\right].
    \end{equation}
\end{lem}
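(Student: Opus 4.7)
The plan is to identify the Riemann-Roch space $\mathcal{L}((k-1)P_\infty)$ on the projective line explicitly, and then read off both the code and its generator matrix by composing the evaluation map with componentwise multiplication by $\bm{v}$.

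First, since $\mathcal{X}$ is the projective line we have $\F_q(\mathcal{X})=\F_q(x)$ and genus $g=0$. I would show that
\[
\mathcal{L}((k-1)P_\infty)=\{f(x)\in \F_q[x]\mid \deg f\leq k-1\}.
\]
The inclusion ``$\supseteq$'' is immediate: for any polynomial $f$ of degree at most $k-1$, we have $v_{P_{x-\alpha}}(f)\geq 0$ at every finite place and $v_{P_\infty}(f)\geq -(k-1)$, so $(f)+(k-1)P_\infty\geq 0$. For ``$\subseteq$'', any nonzero $f\in \F_q(x)$ with $(f)+(k-1)P_\infty\geq 0$ must have no finite poles, hence $f\in \F_q[x]$, and the condition at infinity forces $\deg f\leq k-1$. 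A basis of this space over $\F_q$ is thus $\{1,x,x^2,\ldots,x^{k-1}\}$, which also matches the Riemann-Roch count $\ell((k-1)P_\infty)=\deg G+1-g=k$ when $k\geq 1$.

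Next, the hypothesis $\supp(D)\cap\supp(G)=\emptyset$ guarantees $P_i\neq P_\infty$, so each $P_i=P_{x-\alpha_i}$ with $\alpha_i\in\F_q$, and $x^j$ evaluates to $\alpha_i^j$ at $P_i$. Applying the evaluation map $ev_{\mathcal{L}}$ to the basis $\{1,x,\ldots,x^{k-1}\}$ and then multiplying componentwise by $\bm{v}$, the image of $x^j$ becomes the row $(v_1\alpha_1^j,v_2\alpha_2^j,\ldots,v_n\alpha_n^j)$ for $j=0,1,\ldots,k-1$. Stacking these rows gives precisely the Vandermonde-type matrix displayed in the statement. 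Since every $f\in \mathcal{L}(G)$ is an $\F_q$-linear combination of these basis elements, we conclude
\[
C_{\mathcal{L}}(D,G,\bm{v})=\{(v_1f(\alpha_1),\ldots,v_nf(\alpha_n))\mid f\in\F_q[x],\ \deg f\leq k-1\},
\]
which is exactly $GRS_k(\bm{\alpha},\bm{v})$ as defined in Equation (\ref{eq.GRS}).

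There is no real obstacle here; the only point that requires a little care is the explicit identification of $\mathcal{L}((k-1)P_\infty)$ with the space of polynomials of bounded degree, and this follows either by a direct valuation argument on the rational function field or, more briefly, by matching dimensions via Riemann-Roch. Once this identification is in place, the remainder is a formal computation and the generator matrix is read off row by row.
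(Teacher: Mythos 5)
Your argument is correct and is the standard proof of this fact; the paper itself states this lemma as a citation of \cite[Proposition 2.3.3]{H.2009} without giving a proof, and your identification of $\mathcal{L}((k-1)P_\infty)$ with polynomials of degree at most $k-1$ followed by evaluation of the monomial basis is exactly the textbook argument. The only detail worth adding is that for the displayed matrix to be a generator matrix (rows linearly independent, not merely spanning) one needs $k\leq n$, which the Vandermonde structure then guarantees since the $\alpha_i$ are pairwise distinct.
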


\subsection{Some auxiliary results}
Let notations be the same as before. 
For any vector ${\bm{c}}=(c_1,c_2,\ldots,c_n)\in \F_q^n$, we denote ${\bm{c}^{p^e}}=(c_1^{p^e},c_2^{p^e},\ldots,c_n^{p^e})$. 
Then for a linear code $\C$, we denote 
$\C^{p^e}=\left\{\mathbf{c}^{p^e}\mid \mathbf{c}\in \C\right\}.$

\begin{lem}{\rm(\!\!\cite[Lemma III.2]{caomeng})}\label{l2}
    Let $q=p^h$ and $0 \leq e \leq h-1$.  Let $E=\{x^{p^e+1}\mid x \in \F_q^*\}$.
    Then $\F_{p^e}^*\subseteq E$ if and only if $2e \mid  h$.  
\end{lem}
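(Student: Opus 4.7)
The approach is to convert the set-theoretic containment into a divisibility on subgroup orders, exploiting the fact that $\F_q^*$ is cyclic and that both $\F_{p^e}^*$ and $E$ are subgroups of it. In any cyclic group there is at most one subgroup of each order, so a subgroup of order $m_1$ lies inside a subgroup of order $m_2$ if and only if $m_1\mid m_2$. I would therefore first pin down the two orders in question.

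The set $E$ is the image of the endomorphism $x\mapsto x^{p^e+1}$ of $\F_q^*$, whose kernel has size $\gcd(p^e+1,q-1)$, giving $|E|=(q-1)/\gcd(p^e+1,q-1)$. The inclusion $\F_{p^e}^*\subseteq\F_q^*$ already forces $e\mid h$, in which case $|\F_{p^e}^*|=p^e-1$. Hence the lemma reduces to the numerical equivalence
\[
(p^e-1)\cdot\gcd(p^e+1,q-1)\ \bigm|\ q-1\quad\Longleftrightarrow\quad 2e\mid h.
\]

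For sufficiency, $2e\mid h$ gives $p^{2e}-1\mid q-1$, and in particular $p^e+1\mid q-1$, so $\gcd(p^e+1,q-1)=p^e+1$ and the displayed divisibility collapses to $p^{2e}-1\mid q-1$, which is exactly what we have. For necessity I would assume the divisibility, use $\F_{p^e}^*\subseteq E\subseteq\F_q^*$ to conclude $e\mid h$, write $h=em$, and show $m$ must be even. From $p^e\equiv -1\pmod{p^e+1}$ one computes
\[
p^{em}-1\equiv (-1)^m-1\pmod{p^e+1},
\]
so $\gcd(p^e+1,q-1)$ equals $p^e+1$ when $m$ is even and $\gcd(p^e+1,2)$ when $m$ is odd. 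Assuming $m$ odd (with $p$ odd, so that $p^e+1$ is even), a $2$-adic valuation computation on the factorisation $q-1=(p^e-1)\sum_{i=0}^{m-1}(p^e)^i$ yields $v_2(q-1)=v_2(p^e-1)$, since the geometric sum has an odd number of terms each congruent to $1$ modulo $2$; this contradicts the required $2(p^e-1)\mid q-1$. Hence $m$ must be even, i.e., $2e\mid h$.

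The main obstacle is the necessity direction: ruling out odd $m$ requires the $2$-adic analysis above together with the parity of $p^e+1$ in odd characteristic, whereas the sufficiency direction is essentially immediate from $p^{2e}-1\mid q-1$.
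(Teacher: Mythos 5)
The paper offers no proof of this lemma to compare against: it is quoted verbatim from \cite[Lemma III.2]{caomeng}. Judged on its own, your argument is correct and complete \emph{for odd $p$}: the reduction of the containment to the divisibility $(p^e-1)\gcd(p^e+1,q-1)\mid q-1$ via uniqueness of subgroups of a cyclic group is sound, the sufficiency direction is immediate from $p^{2e}-1\mid p^h-1$, and the $2$-adic computation ($v_2(p^{em}-1)=v_2(p^e-1)$ for $m$ odd, since the geometric sum has an odd number of odd terms) correctly rules out odd $m$.

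The one point to flag is that your parenthetical ``(with $p$ odd, so that $p^e+1$ is even)'' is load-bearing, not cosmetic. For $p=2$ the ``only if'' direction of the statement as printed is simply false: take $p=2$, $e=2$, $h=6$, so $q-1=63$ and $\gcd(p^e+1,q-1)=\gcd(5,63)=1$; then $E=\F_{64}^*$ contains $\F_4^*$, yet $2e=4\nmid 6$. Your own reduction already exposes this, since for $p=2$ and $m$ odd the required divisibility degenerates to $(p^e-1)\mid q-1$, which holds. The source lemma in \cite{caomeng} carries the hypothesis that $q$ is odd, which the transcription above drops; your proof is exactly a proof of the correctly-stated lemma, and you should make the oddness hypothesis explicit rather than parenthetical. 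Note also that the present paper only ever uses the ``if'' direction (in Theorems \ref{t3} and \ref{t5}, the latter in characteristic $2$), and your sufficiency argument establishes that direction in every characteristic, so nothing downstream is affected.
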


\begin{lem}[\!\!\cite{Gd2,Gd1}]\label{p1}
    Let $q=p^h$ and  $\C$ be a linear $[n,k,d]_q$ code.  
    Then for each $0\leq e\leq h-1$, we have 
    \begin{equation}
        \C^{\perp _e}=(\C^{p^{h-e}})^{\perp_0}=(\C^{\perp_0})^{p^{h-e}}.
    \end{equation}
\end{lem}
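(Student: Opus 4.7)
The plan is to verify both claimed equalities by unpacking the definitions of the $e$-Galois and Euclidean duals, applying the Frobenius automorphism $\phi(a)=a^{p^{h-e}}$ of $\F_q$, and comparing dimensions to upgrade a single containment to equality.

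For the first equality $\C^{\perp_e}=(\C^{p^{h-e}})^{\perp_0}$, I would start with an arbitrary $\bm{y}\in\C^{\perp_e}$, so $\sum_{i=1}^n x_i y_i^{p^e}=0$ for every $\bm{x}=(x_1,\ldots,x_n)\in\C$. Raising both sides to the power $p^{h-e}$ and using $y_i^{p^h}=y_i$ (since $y_i\in\F_q$), I obtain $\sum_{i=1}^n x_i^{p^{h-e}}y_i=0$. Since a generic element of $\C^{p^{h-e}}$ has exactly the form $(x_1^{p^{h-e}},\ldots,x_n^{p^{h-e}})$, this shows $\bm{y}\in(\C^{p^{h-e}})^{\perp_0}$. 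Because $\phi$ is a bijection on $\F_q^n$, the map $\bm{x}\mapsto\bm{x}^{p^{h-e}}$ is an $\F_p$-linear bijection that takes a basis of $\C$ to a basis of $\C^{p^{h-e}}$, so $\dim_{\F_q}\C^{p^{h-e}}=k$ and therefore $\dim_{\F_q}(\C^{p^{h-e}})^{\perp_0}=n-k=\dim_{\F_q}\C^{\perp_e}$. This equality of dimensions promotes the containment into the desired identity.

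For the second equality $(\C^{p^{h-e}})^{\perp_0}=(\C^{\perp_0})^{p^{h-e}}$, the key identity is that the Frobenius power commutes with the Euclidean inner product, namely $\langle\bm{a}^{p^{h-e}},\bm{b}^{p^{h-e}}\rangle_0=\langle\bm{a},\bm{b}\rangle_0^{p^{h-e}}$, which holds because $\phi$ is a field automorphism of $\F_q$. Given $\bm{y}=\bm{z}^{p^{h-e}}$ with $\bm{z}\in\C^{\perp_0}$, this identity immediately gives $\langle\bm{x}^{p^{h-e}},\bm{y}\rangle_0=\langle\bm{x},\bm{z}\rangle_0^{p^{h-e}}=0$ for every $\bm{x}\in\C$, proving $(\C^{\perp_0})^{p^{h-e}}\subseteq(\C^{p^{h-e}})^{\perp_0}$. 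Again both sides have dimension $n-k$ over $\F_q$, so equality follows.

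There is no real obstacle here, only bookkeeping; the one point to be careful about is that although $\phi$ is not $\F_q$-linear (it is only $\F_p$-semilinear), it does send $\F_q$-linear subspaces to $\F_q$-linear subspaces of the same dimension, which is all that is needed for the dimension-count step. Once both equalities are established, the chain $\C^{\perp_e}=(\C^{p^{h-e}})^{\perp_0}=(\C^{\perp_0})^{p^{h-e}}$ is complete.
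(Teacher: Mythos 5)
Your proof is correct and complete; the paper itself offers no proof of this lemma, citing it directly from the references, so there is nothing internal to compare against. Both of your containment-plus-dimension-count arguments go through: $\C^{p^{h-e}}$ is indeed an $\F_q$-linear code of dimension $k$ (the Frobenius map is $\F_p$-semilinear and bijective, and $\lambda\,\bm{c}^{p^{h-e}}=(\lambda^{p^e}\bm{c})^{p^{h-e}}$ shows closure under $\F_q$-scalars), and the identity $\langle\bm{a}^{p^{h-e}},\bm{b}^{p^{h-e}}\rangle_0=\langle\bm{a},\bm{b}\rangle_0^{p^{h-e}}$ handles the second equality. One small remark: in the first equality you do not actually need the dimension count, since the step $\sum_i x_iy_i^{p^e}=0\iff\sum_i x_i^{p^{h-e}}y_i=0$ is reversible (raise the second equation to the power $p^e$ and use $x_i^{p^h}=x_i$); this makes the argument self-contained and sidesteps any worry that the fact $\dim\C^{\perp_e}=n-k$ is itself usually derived from this very identity.
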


\begin{lem}{\rm(\!\!\cite[Hilbert's Theorem 90]{32 th2.25})}\label{l4}
    Let $q=p^h$ and $\alpha \in \F_q$. Then the equation $y^p-y=\alpha $ have solutions over $\F_q$ if and only if $Tr_{\F_q/\F_p}(\alpha )=0$, 
    where $Tr_{\F_q/\F_p}$ is the trace mapping from $\F_q$ to $\F_p$.
\end{lem}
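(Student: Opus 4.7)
The plan is to exploit the $\F_p$-linearity of the Artin-Schreier map $\phi\colon \F_q \to \F_q$, $y \mapsto y^p - y$, together with the $\F_p$-linearity of $Tr_{\F_q/\F_p}$, and then match $\F_p$-dimensions of an inclusion of subspaces of $\F_q$.

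For the necessity direction, I would suppose $y \in \F_q$ solves $y^p - y = \alpha$ and apply the trace to both sides. Since $Tr_{\F_q/\F_p}(z) = \sum_{i=0}^{h-1} z^{p^i}$ is invariant under the Frobenius $z \mapsto z^p$, we have $Tr_{\F_q/\F_p}(y^p) = Tr_{\F_q/\F_p}(y)$, whence $Tr_{\F_q/\F_p}(\alpha) = Tr_{\F_q/\F_p}(y^p) - Tr_{\F_q/\F_p}(y) = 0$. This already gives the inclusion $\mathrm{Image}(\phi) \subseteq \ker(Tr_{\F_q/\F_p})$.

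For the sufficiency direction, I would compare the $\F_p$-dimensions of these two subspaces of $\F_q$. The kernel of $\phi$ consists precisely of the roots of $y^p - y$ in $\F_q$, which are the $p$ elements of $\F_p$; rank-nullity then gives $\dim_{\F_p}\mathrm{Image}(\phi) = h - 1$. On the other hand, $Tr_{\F_q/\F_p}$ is a nonzero $\F_p$-linear form from an $h$-dimensional space onto $\F_p$, so its kernel also has $\F_p$-dimension $h - 1$. Combined with the inclusion from the previous step, this forces $\mathrm{Image}(\phi) = \ker(Tr_{\F_q/\F_p})$, so every $\alpha$ with $Tr_{\F_q/\F_p}(\alpha) = 0$ arises as $y^p - y$ for some $y \in \F_q$.

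The only nontrivial ingredient is the surjectivity (equivalently, nontriviality) of $Tr_{\F_q/\F_p}$, which I would justify in one line by noting that the polynomial $X + X^p + X^{p^2} + \cdots + X^{p^{h-1}}$ has degree $p^{h-1} < p^h = |\F_q|$ and therefore cannot vanish on all of $\F_q$; alternatively one simply cites the separability of the finite-field extension $\F_q/\F_p$. With this in hand, the argument is purely linear-algebraic and I do not foresee any serious obstacle.
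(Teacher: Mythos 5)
Your proof is correct and complete: the inclusion $\mathrm{Image}(\phi)\subseteq\ker(Tr_{\F_q/\F_p})$ from Frobenius-invariance of the trace, combined with the dimension count $\dim_{\F_p}\mathrm{Image}(\phi)=h-1=\dim_{\F_p}\ker(Tr_{\F_q/\F_p})$ via rank--nullity and the nonvanishing of the trace polynomial, is exactly the standard argument for the additive Hilbert 90. The paper itself gives no proof of this lemma---it is simply quoted from Lidl and Niederreiter---so there is nothing to compare against; your argument matches the textbook proof being cited.
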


\begin{lem}{\rm(\!\!\cite[Lemma 10]{sok.2021 sd})}\label{l5}
    Let $q=2^h$, where $h$ is even. If $\alpha \in \F_{\sqrt{q}}$, then $Tr_{\F_q / \F_2}(\alpha )=0$ and $Tr_{\F_q / \F_2}(\alpha^3+\alpha )=0$
\end{lem}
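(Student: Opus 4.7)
The plan is to reduce both claims to the transitivity of the trace map for the tower $\F_2 \subseteq \F_{\sqrt{q}} \subseteq \F_q$, which makes sense precisely because $h$ is even so that $\F_{\sqrt{q}}$ is a well-defined intermediate field. Concretely, I would invoke the identity
\begin{equation*}
Tr_{\F_q/\F_2}(\beta) = Tr_{\F_{\sqrt{q}}/\F_2}\bigl(Tr_{\F_q/\F_{\sqrt{q}}}(\beta)\bigr)
\end{equation*}
for every $\beta \in \F_q$, and then exploit the fact that the relative trace $Tr_{\F_q/\F_{\sqrt{q}}}$ has a particularly short expansion in this setting: the extension $\F_q/\F_{\sqrt{q}}$ has degree $2$, so $Tr_{\F_q/\F_{\sqrt{q}}}(\beta) = \beta + \beta^{\sqrt{q}}$.

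Next, I would specialize to $\beta = \alpha$ where $\alpha \in \F_{\sqrt{q}}$. Since $\alpha^{\sqrt{q}} = \alpha$ for any element of $\F_{\sqrt{q}}$, we get $Tr_{\F_q/\F_{\sqrt{q}}}(\alpha) = \alpha + \alpha$, which vanishes because the characteristic is $2$. Feeding this into the transitivity identity yields $Tr_{\F_q/\F_2}(\alpha) = Tr_{\F_{\sqrt{q}}/\F_2}(0) = 0$, proving the first claim. For the second claim, observe that $\F_{\sqrt{q}}$ is closed under multiplication, so $\alpha^3 \in \F_{\sqrt{q}}$ as well, and the same argument applied to $\alpha^3$ gives $Tr_{\F_q/\F_2}(\alpha^3) = 0$. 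The $\F_2$-linearity of the trace then delivers
\begin{equation*}
Tr_{\F_q/\F_2}(\alpha^3 + \alpha) = Tr_{\F_q/\F_2}(\alpha^3) + Tr_{\F_q/\F_2}(\alpha) = 0.
\end{equation*}

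There is no substantive obstacle here; the only point that deserves care is invoking the correct form of transitivity and making explicit use of $h$ being even so that the intermediate field $\F_{\sqrt{q}}$ exists. An equivalent one-shot presentation, which I might use instead for brevity, is to write the absolute trace as $Tr_{\F_q/\F_2}(\alpha) = \sum_{i=0}^{h-1}\alpha^{2^i}$ and note that the condition $\alpha \in \F_{\sqrt{q}}$ forces $\alpha^{2^{h/2+i}} = \alpha^{2^i}$ for all $i$, so the sum pairs up into $h/2$ repeated terms and vanishes in characteristic $2$; the same argument then handles $\alpha^3$.
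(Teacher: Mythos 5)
Your argument is correct: since the paper only cites this lemma from \cite[Lemma 10]{sok.2021 sd} and gives no proof of its own, there is nothing to compare against, but your use of trace transitivity through $\F_{\sqrt{q}}$ (with $Tr_{\F_q/\F_{\sqrt{q}}}(\alpha)=\alpha+\alpha^{\sqrt{q}}=2\alpha=0$, applied to both $\alpha$ and $\alpha^3\in\F_{\sqrt{q}}$) is exactly the standard proof of this fact. Both your main route and your one-shot pairing of the terms $\alpha^{2^i}$ and $\alpha^{2^{h/2+i}}$ in the absolute trace are valid and complete.
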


\section{Galois SO codes from AG codes}\label{sec 30}

In this section, we present a criterion for an AG code being Galois SO.
Then, we propose the embedding method and show a concrete construction of MDS Galois SO AG codes over projective lines. In addition, we
also construct some new Galois SO AG codes over projective elliptic curves, hyper-elliptic
curves, and Hermitian curves.

We now begin this section with the following important lemma. 
This lemma gives a criterion for an AG code being Galois SO 
and hence, it gives an affirmative answer for the first part of Problem \ref{prob.1}.

\begin{lem}\label{t1}
    Let $q=p^h$ and $0 \leq e \leq h-1$.
    Let $\mathcal{X}$ be a smooth projective curve with genus $g$. 
    Let $D=P_1+P_2+\cdots+P_n$ and $G=(k-1)P_\infty$ be two divisors such that $\supp(D) \cap \supp(G) = \emptyset $, 
    where $P_1, P_2, \ldots , P_n$ are $n$ pairwise distinct rational points on $\mathcal{X}$ and $2g-2 < \deg(G) < n$. 
    Let $w$ be a Weil differential such that $v_{P_i}(w)=-1$ for each $1\leq i\leq n$ and $H=D-G+(w)$ satisfying $\supp(D)\bigcap \supp(H)=\emptyset$.
    For any $\bm{v}=(v_1,v_2,\ldots,v_n)\in (\F_q^*)^{n}$, if  $p^eG\leq H$ and ${\Res}_{P_i}(w)=v_i^{p^e+1}$ for each $1\leq i\leq n$,
    then $C_\mathcal{L} (D,G,\bm{v})$ is an $[n,k-g,d]_q$ $e$-Galois SO AG code, 
    where $2g+1\leq k \leq \lfloor \frac{n+p^e+2g-1}{p^e+1} \rfloor$ and $d\geq n-\deg(G)$. 
\end{lem}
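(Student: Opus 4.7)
The plan is to reduce the $e$-Galois self-orthogonality of $\C = C_\mathcal{L}(D,G,\bm{v})$ to a chain of containments among AG codes obtained from the preliminary lemmas. First, I would record the parameters of $\C$ using Lemma \ref{p.parameters}: since $2g-2 < \deg(G) < n$ and $\deg(G-D) < 0$, one has $\ell(G) = \deg(G)+1-g = k-g$ and $\ell(G-D) = 0$, so $\dim(\C) = k-g$; the Goppa bound gives $d \geq n-\deg(G)$ directly.

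Next, I would use Lemma \ref{p1} to translate the Galois condition into a Euclidean one: the identity $\C^{\perp_e} = (\C^{\perp_0})^{p^{h-e}}$ implies that $\C \subseteq \C^{\perp_e}$ if and only if $\C^{p^e} \subseteq \C^{\perp_0}$, after taking $p^e$-th powers on both sides. The heart of the proof is then to verify this Euclidean containment. For the left-hand side, for any $z \in \mathcal{L}(G)$ the Frobenius image $z^{p^e}$ lies in $\mathcal{L}(p^e G)$ since $(z^{p^e}) + p^e G = p^e((z)+G) \geq 0$, and the coordinatewise $p^e$-th power of $(v_1 z(P_1),\ldots,v_n z(P_n))$ equals $(v_1^{p^e} z^{p^e}(P_1),\ldots,v_n^{p^e} z^{p^e}(P_n))$. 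Hence $\C^{p^e} \subseteq C_\mathcal{L}(D, p^e G, \bm{v}^{p^e})$.

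For the right-hand side, Lemma \ref{p0} gives $\C^{\perp_0} = \bm{v}^{-1}\bm{\Res_P(w)}\, C_\mathcal{L}(D,H)$, where $H = D-G+(w)$; the hypothesis $\Res_{P_i}(w) = v_i^{p^e+1}$ makes the diagonal twist exactly $\bm{v}^{p^e}$, so $\C^{\perp_0} = C_\mathcal{L}(D,H,\bm{v}^{p^e})$. Applying Lemma \ref{ralation} to $p^e G \leq H$ yields $C_\mathcal{L}(D, p^e G, \bm{v}^{p^e}) \subseteq C_\mathcal{L}(D, H, \bm{v}^{p^e}) = \C^{\perp_0}$, which closes the chain and establishes $\C \subseteq \C^{\perp_e}$.

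Finally, the admissible range of $k$ comes from the degree consequences of the hypothesis $p^e G \leq H$. Taking degrees in $(p^e+1)G \leq D+(w)$ and using $\deg((w)) = 2g-2$ gives $(p^e+1)(k-1) \leq n+2g-2$, yielding the upper bound $k \leq \lfloor (n+p^e+2g-1)/(p^e+1) \rfloor$, while the lower bound $k \geq 2g+1$ is what is needed to keep $\deg(G)$ safely above $2g-2$ and to keep the resulting SO code non-degenerate. The main subtlety in the argument is tracking the $\bm{v}^{p^e}$ twisting carefully on both sides; once the residue condition is used to match the two diagonal twists, the proof reduces to a formal chain of the three structural lemmas on AG codes.
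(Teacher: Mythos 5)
Your proposal is correct and follows essentially the same route as the paper: both reduce $e$-Galois self-orthogonality to the Euclidean containment $\C^{p^e}\subseteq \C^{\perp_0}$ via Lemma \ref{p1}, identify $\C^{\perp_0}$ with $C_{\mathcal{L}}(D,H,\bm{v}^{p^e})$ through Lemma \ref{p0} and the residue condition, and close the chain with $\C^{p^e}\subseteq C_{\mathcal{L}}(D,p^eG,\bm{v}^{p^e})\subseteq C_{\mathcal{L}}(D,H,\bm{v}^{p^e})$ using Lemma \ref{ralation}. Your additional degree computation justifying the stated range of $k$ is a harmless (and correct) supplement that the paper leaves implicit.
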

\begin{proof}
    Under the given conditions, it can be calculated that 
    \begin{align*}
        ({C_\mathcal{L} (D,G,\bm{v})}^{\perp_e})^{p^e} & = (C_\mathcal{L} (D,G,\bm{v}))^{\perp_0}~({\rm By~ Lemma}~\ref{p1}) \\ 
                                                      & = \bm{v}^{-1}\bm{{\Res}_{P}(w)} C_\mathcal{L} (D,H)~({\rm By~ Lemma}~\ref{p0}) \\ 
                                                      & = C_\mathcal{L} (D,H,\bm{v}^{p^e}),
    \end{align*} 
      where $\bm{{\Res}_P(w)}=(v_1^{p^e+1},v_2^{p^e+1},\ldots,v_n^{p^e+1})$.
    By Definition \ref{d2}, we have
    $(C_\mathcal{L} (D,G,\bm{v}))^{p^e}=({v}_1^{p^e}z^{p^e}(P_1), {v}_2^{p^e}z^{p^e}(P_2), \ldots,{v}_n^{p^e}z^{p^e}(P_n))$, where $z \in \mathcal{L}(G)$. 
    It follows from $z^{p^e} \in \mathcal{L}(p^eG)$ that $(C_\mathcal{L} (D,G,\bm{v}))^{p^e} \subseteq C_\mathcal{L} (D,p^eG,\bm{v}^{p^e})$.
    Since  $p^eG \leq H$, by Lemma \ref{ralation}, 
    we further have $C_\mathcal{L} (D,p^eG,\bm{v}^{p^e})\subseteq C_\mathcal{L} (D,H,\bm{v}^{p^e})$.
    It implies that $(C_\mathcal{L} (D,G,\bm{v}))^{p^e}\subseteq ((C_\mathcal{L} (D,G,\bm{v}))^{\bot_e})^{p^e}.$ 
    Then $C_\mathcal{L} (D,G,\bm{v})\subseteq (C_\mathcal{L} (D,G,\bm{v}))^{\bot_e}.$ 
    Note that the parameters of $C_\mathcal{L} (D,G,\bm{v})$ follows immediately from Lemma \ref{p.parameters}. 
    This completes the proof. 
\end{proof}

\begin{remark}  
    We emphasize here that Lemma \ref{t1} and \cite[Lemma 4]{sok.2021 hso} are actually different.
    Specifically speaking, in \cite[Lemma 4]{sok.2021 hso}, 
    the author constructed some Hermitian SO AG codes $C_{\mathcal{L}_q}(D,G,\bm{v})$ with the space  $\mathcal{L}_q(G)=\{z\in \F_q(\mathcal{X})\backslash \{0\}\mid z(P_i)\in \F_{\sqrt{q}},\ {\rm{for}} \ 1 \leq i \leq n, (z)+G \geq 0\}\cup \{0\}$.
    However, Lemma \ref{t1} takes the space $\mathcal{L}(G)=\{z \in \F_q(\mathcal{X}) \backslash  \{0\}\mid (z)+G \geq 0\} \cup \{0\}$.
    It can be checked that  $\mathcal{L}_q(G) \subseteq \mathcal{L}(G)$.
    Hence, when $e=\frac{h}{2}$ and $h$ is even, there may exist more new Hermitian SO AG codes $C_\mathcal{L} (D,G,\bm{v})$ constructed by Lemma \ref{t1} compared to \cite[Lemma 4]{sok.2021 hso}.
        
\end{remark}

\subsection{MDS Galois SO AG codes from projective lines}\label{sec3}
In this subsection, we first propose an embedding method to construct new MDS Galois SO codes 
from MDS Galois SO AG codes constructed over projective lines.
Then we give a specific construction of MDS Galois SO AG codes. 

\begin{lem}\label{l1}
   Let $q=p^h$, where $p$ is an odd prime. 
   Let $\mathcal{X}$ be a smooth projective line and $P_1, P_2, \ldots , P_n$ be $n$ pairwise 
   distinct rational points on $\mathcal{X}$. 
   Let $G=(k-1)P_\infty$ and $D=P_1+P_2+\cdots+P_n$ be two divisors such that $\supp(G)\bigcap \supp(D)=\emptyset$. 
   Let $w$ be a Weil differential such that ${\Res}_{P_i}(w)={v_i}^{p^e+1}$ 
   for some $v_i \in \F_q^*$, where $1\leq i \leq n$. 
   Suppose that $C_\mathcal{L} (D,G,\bm{v})$ is an $[n,k,n-k+1]_q$ MDS $e$-Galois SO AG code, 
   where $1\leq k \leq\lfloor \frac{n+p^e-1}{p^e+1}\rfloor$, $0\leq e\leq h-1$ and $\bm{v}=(v_1,v_2,\ldots,v_n) \in ({\F_q^*})^n$. 
   Then the following statements hold. 
   \begin{enumerate}
        \item[\rm(1)] If $k=\frac{n-1}{p^e+1}$, then $C_\mathcal{L} (D,G,\bm{v})$ can be embedded 
        into an $[n+1,k+1,n-k+1]_q$ MDS $e$-Galois SO code.

        \item[\rm(2)] If $k<\frac{n-1}{p^e+1}$, then $C_\mathcal{L} (D,G,\bm{v})$ can be embedded into an $[n,k+1,n-k]_q$ MDS $e$-Galois SO code.

        \item[\rm(3)] If $k>\frac{n-1}{p^e+1}$ and $(n-1) \mid  (q-1)$, then $C_\mathcal{L} (D,G,\bm{v})$ 
        can be embedded into an $[n,k+1,n-k]_q$ MDS $e$-Galois SO code.
   \end{enumerate}
\end{lem}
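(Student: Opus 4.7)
The plan is to exploit that on a projective line, $C_{\mathcal{L}}(D,G,\bm{v})$ coincides with the generalized Reed--Solomon code $GRS_k(\bm{\alpha},\bm{v})$ via Lemma \ref{lem.generator matrix}, and to apply the Galois SO criterion of Lemma \ref{t1}. Since the genus $g=0$, one has $(w)=(n-2)P_\infty - D$ for the canonical Weil differential $w=dx/h(x)$, and hence $H=D-G+(w)=(n-k-1)P_\infty$. Thus Galois SO reduces to the numerical inequality $p^e(k-1)\leq n-k-1$, i.e., $k\leq \frac{n+p^e-1}{p^e+1}$, together with the residue normalization $\mathrm{Res}_{P_i}(w)=v_i^{p^e+1}$. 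The three cases in the lemma correspond to different amounts of slack in this inequality.

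Case (2) is the easiest. When $k<\frac{n-1}{p^e+1}$, integrality forces $k(p^e+1)\leq n-2$, hence $(k+1)(p^e+1)\leq n+p^e-1$. Therefore $C':=C_{\mathcal{L}}(D,kP_\infty,\bm{v})=GRS_{k+1}(\bm{\alpha},\bm{v})$ satisfies the Galois SO criterion of Lemma \ref{t1} with the same $D, w, \bm{v}$, giving an $[n,k+1,n-k]_q$ MDS $e$-Galois SO code that contains $C_{\mathcal{L}}(D,G,\bm{v})$ by Lemma \ref{ralation}.

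For case (1), the inequality $k(p^e+1)=n-1$ is tight, so bumping $k$ within the same support $D$ fails. Instead, the plan is to enlarge the evaluation set by one point: choose $\alpha_{n+1}\in \F_q\setminus\{\alpha_1,\ldots,\alpha_n\}$ and $v_{n+1}\in \F_q^*$, form $D'=D+P_{n+1}$, $\bm{\alpha}'=(\bm{\alpha},\alpha_{n+1})$, $\bm{v}'=(\bm{v},v_{n+1})$, and let $C':=GRS_{k+1}(\bm{\alpha}',\bm{v}')$. The map $(v_1 f(\alpha_1),\ldots,v_n f(\alpha_n))\mapsto (v_1 f(\alpha_1),\ldots,v_n f(\alpha_n),v_{n+1}f(\alpha_{n+1}))$ for $\deg f\leq k-1$ injects $C_{\mathcal{L}}(D,G,\bm{v})$ into $C'$, whose parameters are $[n+1,k+1,n-k+1]_q$ MDS. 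To apply Lemma \ref{t1} I take the new Weil differential $w'=c\cdot dx/[h(x)(x-\alpha_{n+1})]$, for which $H'=(n-k-1)P_\infty$ and the Galois SO inequality $k(p^e+1)\leq n-1$ holds with equality. The main obstacle is arranging the residues $c/[h'(\alpha_i)(\alpha_i-\alpha_{n+1})]$ and $c/h(\alpha_{n+1})$ to be $(p^e+1)$-th powers in $\F_q^*$; I plan to write $v'_i=v_i u_i$ with $u_i^{p^e+1}=c/(\alpha_i-\alpha_{n+1})$, then pick $\alpha_{n+1}$ and $c$ using Lemma \ref{l2} (which guarantees $\F_{p^e}^*\subseteq\{x^{p^e+1}\mid x\in \F_q^*\}$ under the standing hypothesis $2e\mid h$) so that every required ratio lies in the $(p^e+1)$-th power subgroup.

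Case (3) sits beyond the Galois SO numerical bound, so the target code cannot be a standard $GRS_{k+1}$ on the same $n$ points. The hypothesis $(n-1)\mid (q-1)$ furnishes a primitive $(n-1)$-th root of unity $\omega$ in $\F_q^*$, letting me realize $n-1$ of the evaluation points as the cyclic group $\langle\omega\rangle$ (after a suitable linear change of variable) and treat the remaining point as a formal point at infinity. The extended code is then obtained by appending a zero column and a final row $(v_1\alpha_1^k,\ldots,v_{n-1}\alpha_{n-1}^k,v'_n)$ to the generator matrix of Lemma \ref{lem.generator matrix}, producing an $[n,k+1,n-k]_q$ MDS code containing $C_{\mathcal{L}}(D,G,\bm{v})$. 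Verifying Galois SO reduces to vanishing of multiplicative power sums $\sum_{i=1}^{n-1}v_i^{p^e+1}\alpha_i^{s}$ on $\langle\omega\rangle$, which follow from orthogonality of characters modulo $n-1$, together with a single scalar equation fixing $(v'_n)^{p^e+1}$. I expect this case to be the most delicate, as it blends the extended-GRS combinatorics with the $(p^e+1)$-th power residue constraint.
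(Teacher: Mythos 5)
Your treatment of case (2) is correct and is in fact a cleaner route than the paper's: since $k(p^e+1)\leq n-2$ by integrality, the enlarged code $C_{\mathcal{L}}(D,kP_\infty,\bm{v})$ satisfies the criterion $p^eG\leq H$ of Lemma \ref{t1} directly, and Lemma \ref{ralation} gives the containment. But cases (1) and (3) have genuine gaps. In case (1), your fix for the residue condition --- replacing $v_i$ by $v_i'=v_iu_i$ with $u_i^{p^e+1}=c/(\alpha_i-\alpha_{n+1})$ --- destroys the embedding: the original codeword $(v_1f(\alpha_1),\dots,v_nf(\alpha_n))$ is no longer a prefix of a codeword of $GRS_{k+1}(\bm{\alpha}',\bm{v}')$ once the multipliers change, so the map you describe does not land in $C'$. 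Moreover, the simultaneous requirement that $c/\bigl[h'(\alpha_i)(\alpha_i-\alpha_{n+1})\bigr]$ be a $(p^e+1)$-th power for every $i$ cannot be arranged in general (the differences $\alpha_i-\alpha_{n+1}$ fall into different power-residue cosets), and the hypothesis $2e\mid h$ that you invoke via Lemma \ref{l2} is not assumed in Lemma \ref{l1}. The paper sidesteps all of this by appending a \emph{zero} column together with one new row $(v_1\alpha_1^k,\dots,v_n\alpha_n^k,\gamma)$: the zero column keeps the old $k$ rows pairwise Galois-orthogonal for free, and the only thing to check is the new row against everything, which (when $(p^e+1)k=n-1$) reduces to the Lagrange identity $\sum_i\alpha_i^{n-1}\prod_{j\neq i}(\alpha_i-\alpha_j)^{-1}=1$ together with the choice $\gamma^{p^e+1}=-1$. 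No new evaluation point, no new differential, and no re-normalization of $\bm{v}$ are needed.

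Case (3) is further from a proof. The evaluation points $\alpha_1,\dots,\alpha_n$ and the multipliers $v_i$ are fixed by the hypotheses (the $v_i^{p^e+1}$ are residues of the given $w$), so you are not free to ``realize'' the points as $\langle\omega\rangle$ after a change of variable and then assume $v_i^{p^e+1}$ behaves like a character on that group; the orthogonality-of-characters argument therefore has nothing to act on. Your bookkeeping is also inconsistent: appending both a column and a row to a $k\times n$ generator matrix yields a length-$(n+1)$ code, not the claimed length $n$. In the paper, case (3) is handled by exactly the same zero-column extension as case (1); the role of $(n-1)\mid(q-1)$ is to force the rows of the Vandermonde system $B_{(p^e+1)k}X^{T}=\bm{\varepsilon}$ beyond the first $n-1$ to be linearly dependent on earlier ones, so that $\bm{v}^{p^e+1}$ still solves the enlarged system even though the numerical bound $p^eG\leq H$ fails. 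That rank-reduction idea is the missing ingredient your proposal would need.
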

\begin{proof}
    By Lemma \ref{lem.generator matrix}, we can write the generator matrix of  $C_\mathcal{L} (D,G,\bm{v})$ as 
    \begin{equation*}
        G=\left[\begin{array}{cccc}
            v_1 & v_2 & \cdots & v_n \\
            v_1\alpha _1 & v_2\alpha _2 & \cdots & v_n\alpha _n \\
            \vdots & \vdots & \ddots & \vdots\\
            v_1\alpha _1^{k-1} & v_2\alpha _2^{k-1} & \cdots & v_n\alpha _n^{k-1} \\
        \end{array}\right] =\left[\begin{array}{c}
            {\bm g_1}\\
            {\bm g_2}\\
            \vdots\\
            {\bm g_k}\\
        \end{array}\right].
    \end{equation*}
    Consider the system of equations 
    \begin{align}\label{eq.system}
       B_{(p^e+1)(k-1)}X^T = \bm{0},
    \end{align}
    where $B_j=\left[\begin{array}{cccc}
        1 & 1 & \cdots & 1\\
        \alpha_1  & \alpha_2  & \cdots & \alpha_n \\
        \vdots & \vdots & \ddots & \vdots \\
        \alpha_1^j  & \alpha_2^j  & \cdots & \alpha_n^j \\
    \end{array}\right]$ for $(k-1)\leq j\leq (p^e+1)(k-1)$, $X^T$ is the transpose of $X$ and $\bm{0}=(0,0,\ldots,0)^T$ is a column zero vector of length $(p^e+1)(k-1)$. 
    Since $C_\mathcal{L} (D,G,\bm{v})$ is $e$-Galois SO, 
    $\bm{v}^{p^e+1}=(v_1^{p^e+1},v_2^{p^e+1},\ldots,v_n^{p^e+1})$ is always a nonzero solution of Equation (\ref{eq.system}). 

    Let $\gamma \in \F_q$. 
    Next we consider the linear code $\widetilde{C_\mathcal{L}(D,G,\bm{v})}$ with a generator matrix 
    \begin{equation*}
        \widetilde{G}=\left[\begin{array}{ccccc}
        v_1 & v_2 & \cdots & v_n & 0\\
        v_1 \alpha _1 & v_2 \alpha _2 & \cdots & v_n \alpha _n & 0\\
        \vdots & \vdots & \ddots & \vdots & \vdots\\
        v_1 \alpha_1^{k-1} & v_2 \alpha _2^{k-1} & \cdots & v_n \alpha _n^{k-1} & 0\\
        v_1 \alpha_1^{k} & v_2 \alpha _2^{k} & \cdots & v_n \alpha _n^{k} & \gamma \\
        \end{array}\right] =\left[\begin{array}{c}
            \widetilde{\bm g_1}\\
            \widetilde{\bm g_2}\\
            \vdots\\
            \widetilde{\bm g_k}\\
            \widetilde{\bm g_{k+1}}\\
        \end{array}\right]
    \end{equation*}
    and the system of equations 
    \begin{align}\label{(1)}
        B_{(p^e+1)k}X^T = \bm{\varepsilon},
    \end{align}
    where $\bm{\varepsilon} =(0,0,\ldots,0,-\gamma^{p^e+1})^T $ is a column vector of length $(p^e+1)k$. 
    For our desired results, we need to determine whether Equation (\ref{(1)}) has nonzero solutions.  
    We have the following three cases. 

    \textbf{Case 1:}
    If $k(p^e+1)+1=n$, then $B_{(p^e+1)k}$ is invertible. 
    Put $\gamma \in \F_q^*$ satisfying $\gamma^{p^e+1}=-1$.
    By Cramer's rule, Equation (\ref{(1)}) has the same (unique) solution 
    $\bm{v}^{p^e+1}=(v_1^{p^e+1},v_2^{p^e+1},\ldots,v_n^{p^e+1})$ with Equation (\ref{eq.system}), 
    where $v_i^{p^e+1}=\prod_{j=1,j\neq i}(\alpha _i-\alpha _j)^{-1}$ for $1 \leq i \leq n $.
    It implies that $\langle \widetilde{\bm g_i}, \widetilde{\bm g_{k+1}} \rangle _e=0$ for each $1 \leq i \leq k+1$. 
    On one hand, it is obvious that the subcode ${\widetilde{C_\mathcal{L}(D,G,\bm{v})}}_{(k)}$ of $\widetilde{C_\mathcal{L}(D,G,\bm{v})}$,   
    generated by $\widetilde{\bm g_1}, \widetilde{\bm g_2}, \ldots, \widetilde{\bm g_k}$, is an $[n+1,k]_q$ $e$-Galois SO code 
    since $C_\mathcal{L}(D,G,\bm{v})$ is $e$-Galois SO. 
    On the other hand, $\widetilde{\bm g_1}, \widetilde{\bm g_2}, \ldots, \widetilde{\bm g_k}, \widetilde{\bm g_{k+1}}$     
    is linear independent. 
    Hence, $\widetilde{C_\mathcal{L}(D,G,\bm{v})}$ is an $[n+1,k+1,n-k+1]_q$ MDS $e$-Galois SO code.
    
    \textbf{Case 2:} If $k(p^e+1)+1 < n$, it is clear that Equation (\ref{(1)}) only has a trivial solution if $\gamma\neq 0$. 
    Then by a similar argument to \textbf{Case 1} above, deleting the last zero coordinate of $\widetilde{C_\mathcal{L}(D,G,\bm{v})}$ 
    yields an $[n,k+1,n-k]_q$ MDS $e$-Galois SO code. 

    \textbf{Case 3:} If $k(p^e+1)+1>n$, it is easily seen that the last $(p^e+1)k-n+2$ rows of $B_{(p^e+1)k}$ are 
    linearly dependent on the first $n-1$ rows since $(n-1) \mid  (q-1)$. 
    Hence, the number of linearly independent equations in Equation (\ref{(1)}) is less than $n$. 
    Then \textbf{Case 3} is similar to \textbf{Case 2}. 

    We complete the proof. 
\end{proof}
 


In the following, we give an explicit constructions of MDS Galois SO codes from projective lines. 
To this end, we need the following lemma.
\begin{lem}\label{l3}
    Let $q=p^h$ and  $2e \mid  h$, where $p$ is an odd prime and $0\leq e\leq h-1$. 
    Let $\alpha_i$ and  $\alpha _j$ be two distinct elements of $\F_q^*$. 
     Assume that $E=\{x^{p^e+1}\mid x \in \F_q^*\}$ and $(p^{2e}-1) \mid \frac{q-1}{p^e+1}$, 
    then $\alpha_i^{\frac{q-1}{p^e+1}}-\alpha _j^{\frac{q-1}{p^e+1}} \in E$.
\end{lem}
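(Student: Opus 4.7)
\medskip

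\noindent\textbf{Proof plan for Lemma \ref{l3}.} Set $m=\frac{q-1}{p^e+1}$. The plan is to first re-describe $E$ as a subgroup of $\F_q^*$, then argue that $\alpha_i^m$ and $\alpha_j^m$ are forced into a small subfield, and finally exploit the divisibility hypothesis $(p^{2e}-1)\mid m$ to conclude the difference is an $(p^e+1)$-th power.

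First I would observe that, because $2e\mid h$, we have $p^{2e}-1\mid q-1$ and hence $p^e+1\mid q-1$, so $|E|=m$. Since $\F_q^*$ is cyclic and $E$ is the image of the $(p^e+1)$-power map, $E$ is exactly the unique subgroup of order $m$, i.e.\
\begin{equation*}
E=\{x\in\F_q^*\mid x^m=1\}.
\end{equation*}
Thus the goal reduces to showing that $(\alpha_i^m-\alpha_j^m)^m=1$ whenever $\alpha_i^m\neq\alpha_j^m$ (the case $\alpha_i^m=\alpha_j^m$ is vacuous here, since the lemma is used in contexts where this difference appears as a nonzero quantity).

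Next I would show that $\alpha_i^m$ and $\alpha_j^m$ both lie in $\F_{p^{2e}}^*$. Indeed, $(\alpha_i^m)^{p^e+1}=\alpha_i^{m(p^e+1)}=\alpha_i^{q-1}=1$, so $\alpha_i^m$ is a $(p^e+1)$-th root of unity. Since $p^e+1\mid p^{2e}-1$, every $(p^e+1)$-th root of unity lies in $\F_{p^{2e}}^*$, and the same holds for $\alpha_j^m$. Consequently $\alpha_i^m-\alpha_j^m\in\F_{p^{2e}}$, and assuming it is nonzero, it lies in $\F_{p^{2e}}^*$, so
\begin{equation*}
(\alpha_i^m-\alpha_j^m)^{p^{2e}-1}=1.
\end{equation*}

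Finally I would invoke the hypothesis $(p^{2e}-1)\mid m$: write $m=(p^{2e}-1)\ell$ for some positive integer $\ell$, then
\begin{equation*}
(\alpha_i^m-\alpha_j^m)^{m}=\bigl((\alpha_i^m-\alpha_j^m)^{p^{2e}-1}\bigr)^{\ell}=1,
\end{equation*}
which by the reformulation of $E$ gives $\alpha_i^m-\alpha_j^m\in E$, as desired. There is no serious obstacle here; the essential idea is just the twofold use of the divisibility chain $p^e+1\mid p^{2e}-1\mid m\mid q-1$ to simultaneously trap the $\alpha_i^m$ in $\F_{p^{2e}}$ and to pull their difference back into the subgroup of order $m$.
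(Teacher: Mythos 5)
Your proof is correct, but it takes a genuinely different route from the paper's. The paper writes $\alpha_i=w^a$, $\alpha_j=w^b$ for a primitive element $w$, uses the Frobenius identity $(\alpha_i^N-\alpha_j^N)^{p^e}=\alpha_i^{Np^e}-\alpha_j^{Np^e}$ together with $Np^e\equiv -N \pmod{q-1}$ to compute $(\alpha_i^N-\alpha_j^N)^{p^e-1}=w^{\frac{q-1}{2}-(a+b)N}$ explicitly, and then finishes with discrete-logarithm bookkeeping: it extracts the exponent of $\alpha_i^N-\alpha_j^N$ modulo $\frac{q-1}{p^e-1}$ and checks divisibility by $p^e+1$ using $(p^{2e}-1)\mid N$ and $2(p^{2e}-1)\mid q-1$. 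Your argument replaces all of this with three structural observations: $E$ is exactly the group of $N$-th roots of unity; $\alpha_i^N$ and $\alpha_j^N$ are $(p^e+1)$-th roots of unity and hence lie in the subfield $\F_{p^{2e}}$ (using $p^e+1\mid p^{2e}-1$ and $2e\mid h$); and $(p^{2e}-1)\mid N$ then kills the nonzero difference under the $N$-th power map. This is shorter, makes the role of each hypothesis transparent, and, unlike the paper's computation, which relies on $w^{\frac{q-1}{2}}=-1$, nowhere uses that $p$ is odd, so it would cover even characteristic as well. One caveat, which you correctly flag and which applies equally to the paper's own proof (it divides by $\alpha_i^N-\alpha_j^N$): the statement only assumes $\alpha_i\neq\alpha_j$, and if $\alpha_i^N=\alpha_j^N$ the difference is $0\notin E$, so the lemma really requires $\alpha_i^N\neq\alpha_j^N$; this does hold in its application in Theorem \ref{t3}, where the $\alpha_i$ represent distinct cosets of $U_N$.
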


\begin{proof}
    Denote $N=\frac{q-1}{p^e+1}$.
    Let $w$ be a primitive element of $\F_q$. 
    Then we can write $\alpha_i=w^a$ and  $\alpha _j=w^b$, 
    where $a$ and $b$ are distinct modulo $(q-1)$. 
    Raising $\alpha_i^{N}-\alpha _j^{N}$ to the power $p^e+1$ , we have
    \begin{align*}
        (\alpha_i^{N}-\alpha _j^{N})^{p^e-1}&=\frac{\alpha _i^{Np^e}-\alpha_j^{Np^e}}{\alpha _i^N-\alpha _j^N}=\frac{\alpha _i^{q-1-N}-\alpha_j^{q-1-N}}{\alpha _i^N-\alpha _j^N}
        =\frac{w^{\frac{q-1}{2}}}{w^{aN}w^{bN}}=w^{\frac{q-1}{2}-(a+b)N}
    \end{align*}
    where $w^{\frac{q-1}{2}}=-1$ since $q$ is odd.
    It further deduces from $(p^{2e}-1) \mid N$ and $(p^{2e}-1) \mid (q-1)$ that 
    there is an integer $c$ such that $\alpha_i^{N}-\alpha _j^{N}=w^{\frac{q-1}{2(p^e-1)}-\frac{(a+b)N}{p^e-1}+\frac{c(q-1)}{p^e-1}}=w^{\left( \frac{q-1}{2(p^{2e}-1)}-\frac{(a+b)N}{p^{2e}-1}+\frac{c(q-1)}{p^{2e}-1}\right)(p^e+1)} \in E$.
    This completes the proof.
\end{proof}

\begin{theorem}\label{t3}
    Let $q=p^h$ and $2e \mid  h$, where $p$ is an odd prime and $0\leq e\leq h-1$. 
    Let $n=(t+1)\frac{q-1}{p^e+1}+1$ and $(p^{2e}-1) \mid \frac{q-1}{p^e+1}$, where  $1 \leq t \leq p^e$. 
    Then for any $1 \leq k \leq \lfloor \frac{(t+1)(q-1)+p^e(p^e+1)}{(p^e+1)^2}\rfloor$, 
    there exists an $[n,k,n-k+1]_q$ MDS $e$-Galois SO AG code.
\end{theorem}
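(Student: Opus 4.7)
The plan is to invoke Lemma \ref{t1} on the projective line $\mathbb{P}^1$ (so $g=0$, which guarantees the resulting AG code is MDS) with carefully chosen evaluation points and weights. I will set $D = P_1+\cdots+P_n$, $G = (k-1)P_\infty$, and take the Weil differential $w = dz/h(z)$ with $h(z) = \prod_{i=1}^n (z-\alpha_i)$. A direct computation yields $(w) = (n-2)P_\infty - D$, whence $v_{P_i}(w)=-1$, $\Res_{P_i}(w) = 1/\prod_{j\ne i}(\alpha_i-\alpha_j)$, and $H = D-G+(w) = (n-k-1)P_\infty$. The disjointness $\supp(D)\cap\supp(H) = \emptyset$ is immediate, and the inequality $p^e G \leq H$ reduces to $p^e(k-1)\leq n-k-1$, which is algebraically equivalent to the stated bound $k\leq \lfloor((t+1)(q-1)+p^e(p^e+1))/(p^e+1)^2\rfloor$.

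Hence the only real task is to exhibit evaluation points $\alpha_1,\ldots,\alpha_n \in \F_q$ for which $\prod_{j\ne i}(\alpha_i-\alpha_j)\in E := \{x^{p^e+1} : x\in \F_q^*\}$ for every $i$, so that a $(p^e+1)$-th root $v_i\in\F_q^*$ of $\Res_{P_i}(w)$ exists. I will take $\alpha_n = 0$ together with $\{\alpha_1,\ldots,\alpha_{n-1}\} = \bigcup_{k=1}^{t+1} c_k H_N$, the union of $t+1$ distinct cosets of the subgroup $H_N = \langle \beta\rangle$ of order $N = (q-1)/(p^e+1)$, where $\beta = \omega^{p^e+1}$ for a primitive element $\omega$ of $\F_q$. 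The constraint $1\le t\le p^e$ guarantees enough cosets exist inside $\F_q^*/H_N$, which has size $p^e+1$, and this accounts for exactly $(t+1)N+1 = n$ points.

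For $\alpha_i = c_k\beta^j$ lying in the union of cosets, I will factor $\prod_{l\ne i}(\alpha_i-\alpha_l)$ as the product of the contribution $\alpha_i$ from $\alpha_n=0$, the within-coset factor $N c_k^{N-1}\beta^{-j}$ (obtained from the derivative identity applied to $x^N-1$), and the $t$ cross-coset factors $(-1)^N\beta^{N(N-1)/2}(c_l^N - c_k^N)$, one for each $l \ne k$. The hypothesis $(p^{2e}-1)\mid N$ forces $(p^e+1)\mid N$, which places the cyclic group $\mu_{p^e+1}$ inside $E$ and hence $c_k^N \in E$; Lemma \ref{l2} together with $N\equiv -1\pmod p$ delivers $-1,\,N\in\F_{p^e}^*\subseteq E$; a short order argument shows $\beta^{N(N-1)/2}\in\{\pm 1\}\subseteq E$; and Lemma \ref{l3} finally gives $c_l^N - c_k^N\in E$ whenever $c_k,c_l$ lie in distinct cosets. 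A parallel computation handles the case $\alpha_n = 0$.

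The main technical obstacle is this cascade of coset checks — each scalar $\pm 1$, $N\bmod p$, $c_k^N$, $\beta^{N(N-1)/2}$ and each difference $c_l^N - c_k^N$ must be shown to lie in $E$ — and this is precisely where Lemmas \ref{l2} and \ref{l3} together with the divisibility $(p^{2e}-1)\mid N$ enter in an essential way. Once all residues are confirmed to be $(p^e+1)$-th powers, Lemma \ref{t1} produces an $e$-Galois SO AG code of length $n$, dimension $k - g = k$, and minimum distance $d \geq n - \deg(G) = n - k + 1$; the Singleton bound then upgrades this to $d = n - k + 1$, finishing the proof.
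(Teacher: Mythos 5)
Your proposal is correct and follows essentially the same route as the paper: the projective line with $G=(k-1)P_\infty$, the differential $dx/h(x)$ for $h$ vanishing on $\{0\}$ together with $t+1$ cosets of the order-$N$ subgroup, and the verification via Lemmas \ref{l2} and \ref{l3} (using $(p^{2e}-1)\mid N$) that every residue $1/h'(\alpha_i)$ is a $(p^e+1)$-th power, followed by Lemma \ref{t1}. The only cosmetic difference is that you expand $h'(\alpha_i)$ coset-by-coset rather than differentiating the closed form $h(x)=(x^{N+1}-x)\prod_i(x^N-\alpha_i^N)$ as the paper does; the scalars you isolate ($\pm1$, $N$, $c_k^N$, $c_l^N-c_k^N$) are exactly the ones the paper checks.
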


\begin{proof}
    Denote $U_N=\{\alpha\in \F_q\mid \alpha ^N=1\}=\{u_1,u_2,\ldots,u_N\}$, where $N=\frac{q-1}{p^e+1}$. 
    Let $w$ is a primitive element of $\F_q$.
    Assume that there exist $t$ distinct multiplicative cosets with their coset leaders 
    $\alpha _1=w^{\lambda _1},\alpha _2=w^{\lambda_2}, \ldots, \alpha _t=w^{\lambda_t}$ being elements in $\F_q$, where $\lambda_1,\lambda _2, \ldots,\lambda _t$ is distinct modulo $p^e+1$.
    Then $\alpha _1U_N,\alpha _2U_N, \ldots, \alpha _tU_N$ are $t$ distinct nonzero cosets of $U_n$.

    Put $U=U_N \cup (\cup_{i=1}^t\alpha _iU_N)\cup\{0\}$ and let 
    \begin{equation*}
        h(x) =\prod _{\gamma \in U}(x- \gamma ) = (x^{N+1}-x)\prod _{i=1}^{t}(x^N-\alpha _i^N).
    \end{equation*}
    Then the derivative $h'(x)$ of $h(x)$ is given by
    \begin{equation*}
        h'(x)=((N+1)x^N-1)\prod _{i=1}^{t}(x^N-\alpha _i^N) + Nx^N(x^N-1)\sum_{j=1}^{t} \prod _{i=1, i\neq j}^t(x^N-\alpha _i^N). 
    \end{equation*}
    Let $E=\{x^{p^e+1} \mid x \in \F_q^* \}$. 
    For any $1 \leq j \leq t$ and $1 \leq s \leq N$, 
    we have
    \begin{equation*}
        h'(u_s) = N\prod _{i=1}^{t} (1-\alpha _i^N),
    \end{equation*}
    
    \begin{equation*}
        h'(\alpha _ju_s) =N\alpha _j^N(\alpha _j^N-1)\prod _{i=1,i \neq j}^t(\alpha _j^N-\alpha _i^N),
    \end{equation*}
    and
    \begin{equation*}
        h'(0)=(-1)^{t+1}\prod_{i=1}^t\alpha _i^N=w^{\frac{(t+1)(q-1)}{2(p^e+1)}(p^e+1)}\prod_{i=1}^t\alpha _i^N.
    \end{equation*}
    It turns out from 
    Lemmas \ref{l2} and \ref{l3} that $h'(u_s)$, $h'(\alpha _ju_s)$ and $h'(0)$ are contained in $E$ since $(p^{2e}-1) \mid N$.
    Hence, there is $\beta_j \in \F_q^*$ such that $h'(\gamma_j ) = \beta_j ^{p^e+1}$ for any $\gamma_j  \in U$, where $1 \leq j \leq n$. 

    Set $G=(k-1)P_\infty$, $D=(h(x))_0$ and $w=\frac{dx}{h(x)}$. 
    Then $H=D-G+(w) = ((t+1)\frac{q-1}{p^e+1} +1-2)P_\infty-G$ and 
    ${\Res}_{P_j}(w)=\frac{1}{h'(\gamma _j)}=\frac{1}{\beta _j^{p^e+1}} $, where $1 \leq j \leq n$.     
    By taking $\bm{v}=(\beta _1^{-1}, \beta _2^{-1},\ldots, \beta _n^{-1})$, 
    it follows that the length of the AG code $C_\mathcal{L} (D,G,\bm{v})$ is $(t+1)\frac{q-1}{p^e+1} +1=n$ 
    and the dimension of the AG code $C_\mathcal{L} (D,G,\bm{v})$ is $\deg(G)+1=k$.     
    Since $1 \leq k \leq \lfloor \frac{(t+1)(q-1)+p^e(p^e+1)}{(p^e+1)^2}\rfloor $, we have $p^eG \leq H$. 
    Then from Lemma \ref{t1}, the AG code $C_\mathcal{L} (D,G,\bm{v})$ is an $[n,k,n-k+1]_q$ MDS $e$-Galois SO code. 

    We have completed the proof. 
\end{proof}

Applying Lemmas \ref{l1} (1) and (3) to Theorem \ref{t3}, one can immediately obtain more new MDS Galois SO codes in the following corollary. 
Note that Part (2) of Lemma \ref{l1} can not yield new MDS Galois SO codes in this specific case 
because of the flexible range of available dimensions in Theorem \ref{t3}.

\begin{cor}\label{c1}
    Let notations be the same as in Theorem \ref{t3}. 
    Then the following statements hold.
    \begin{enumerate}
        \item[\rm(1)] If $k=\frac{(t+1)(q-1)}{(p^e+1)^2}$, there exists an $[n+1,k+1,n-k+1]_q$ MDS $e$-Galois SO code.              
    
        \item[\rm(2)] If $k > \frac{(t+1)(q-1)}{(p^e+1)^2}$ and
                     $ \frac{(t+1)(q-1)}{p^e+1} \mid (q-1)$, 
        there exists an $[n,k+1,n-k]_q$ MDS $e$-Galois SO code.    
    \end{enumerate}
\end{cor}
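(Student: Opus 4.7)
The plan is to derive Corollary \ref{c1} by directly plugging the MDS $e$-Galois SO AG codes produced in Theorem \ref{t3} into the embedding machinery of Lemma \ref{l1}. The code $C_{\mathcal{L}}(D,G,\bm{v})$ constructed in Theorem \ref{t3} is built from a projective line with $G=(k-1)P_{\infty}$, $D=(h(x))_0$, and Weil differential $w=dx/h(x)$ satisfying $\mathrm{Res}_{P_j}(w)=v_j^{p^e+1}$ for the chosen $\bm{v}\in (\F_q^*)^n$, so it is already in exactly the form required to invoke Lemma \ref{l1}. Thus only the arithmetic hypotheses of Lemma \ref{l1} need to be checked.

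For part (1), I would observe that $n=(t+1)\frac{q-1}{p^e+1}+1$ gives
\[
\frac{n-1}{p^e+1}=\frac{(t+1)(q-1)}{(p^e+1)^2},
\]
so the assumed value $k=\frac{(t+1)(q-1)}{(p^e+1)^2}$ is precisely $k=\frac{n-1}{p^e+1}$, i.e. $k(p^e+1)+1=n$. Moreover this $k$ lies in the dimension range allowed by Theorem \ref{t3} (the upper bound $\lfloor((t+1)(q-1)+p^e(p^e+1))/(p^e+1)^2\rfloor$ clearly dominates $\frac{(t+1)(q-1)}{(p^e+1)^2}$). Hence Lemma \ref{l1}(1) applies and yields an $[n+1,k+1,n-k+1]_q$ MDS $e$-Galois SO code.

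For part (2), the hypothesis $k>\frac{(t+1)(q-1)}{(p^e+1)^2}$ translates to $k>\frac{n-1}{p^e+1}$, and the divisibility hypothesis $\frac{(t+1)(q-1)}{p^e+1}\mid (q-1)$ is exactly $(n-1)\mid (q-1)$. These are precisely the two conditions required by Lemma \ref{l1}(3), which then produces an $[n,k+1,n-k]_q$ MDS $e$-Galois SO code. I would also note, following the remark preceding the corollary, that Lemma \ref{l1}(2) is redundant here: for $k<\frac{n-1}{p^e+1}$ the resulting parameters $[n,k+1,n-k]_q$ are already supplied directly by Theorem \ref{t3} (with dimension $k+1$ in the allowed range).

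There is no substantive obstacle; the proof is a bookkeeping exercise verifying that the numerical invariants of Theorem \ref{t3} match the three cases of Lemma \ref{l1}. The only mild subtlety is confirming that the chosen $k$ in each case actually lies within the dimension window of Theorem \ref{t3} so that the starting MDS $e$-Galois SO AG code exists, which reduces to elementary inequalities on $\frac{(t+1)(q-1)+p^e(p^e+1)}{(p^e+1)^2}$.
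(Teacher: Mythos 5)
Your proposal is correct and follows exactly the paper's route: the corollary is obtained by specializing Lemma \ref{l1}(1) and \ref{l1}(3) to the codes of Theorem \ref{t3}, using $n-1=(t+1)\frac{q-1}{p^e+1}$ to translate the hypotheses into $k=\frac{n-1}{p^e+1}$ and $(n-1)\mid(q-1)$ respectively. Your remark on the redundancy of Lemma \ref{l1}(2) also matches the paper's own observation.
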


\begin{remark}
    Compared with Table \ref{tab:1}, it is easily seen that the construction presented in Theorem \ref{t3} is new 
    since it produces some MDS Galois SO codes with a different length form. 
    Hence, Corollary \ref{c1} will also yield some new MDS Galois SO codes. 
\end{remark}

A specific example of MDS Galois SO codes 
from Theorem \ref{t3} as well as Corollary \ref{c1} are given in the following. 


\begin{example}
      Let notations be the same as in Theorem \ref{t3}.
       Take $p=3$ and $h=8$, then the case satisfying $2e \mid h$ and $(p^{2e}-1) \mid N$ is only $e=1$. 
       Hence, we list some MDS $1$-Galois SO codes as follows.
       \begin{itemize}
          
        \item [\rm (1)] From Theorem \ref{t3}, we obtain 
        $[1641,k,1642-k]_{3^8}\ (1 \leq k \leq 410)$,
        $[3281,k,3282-k]_{3^8}\ (1 \leq k \leq 820)$,
        $[4921,k,4922-k]_{3^8}\ (1 \leq k \leq 1230)$,
        $[6561,k,6562-k]_{3^8}\ (1 \leq k \leq 1640)$ MDS $1$-Galois SO AG codes. 
   
        \item [\rm (2)] From Corollary \ref{c1}, we obtain 
        $[1642,411,1232]_{3^8}$, $[3282,821,2462]_{3^8}$, $[4922,1231,3692]_{3^8}$, $[6562,1641,4922]_{3^8}$ MDS $1$-Galois SO codes. 
     \end{itemize}
\end{example}

\subsection{Galois SO AG codes from projective curves}\label{sec4}

In this subsection, 
we will employ some special projective curves to construct Galois SO AG codes. 
Specifically, we use projective elliptic curves, projective hyper-elliptic curves 
and projective Hermitian curves in Weierstrass from. 



First, we consider the elliptic curve $\mathcal{E}_{a,b,c}$ with genus $g=1$ defined by the equation 
\begin{align}
    \mathcal{E}_{a,b,c}: y^2+ay=x^3+bx+c,
\end{align}
where $a,b,c\in \F_{2^h}$. 
Let $S_{a,b,c}$ be the set of $x$-component of the projective points of $\mathcal{E}_{a,b,c} $ over $\F_{2^h}$, 
i.e., 
\begin{align}\label{eq.S_{a,b,c}}
    S_{a,b,c}=\{\alpha \in \F_{2^h}\mid \exists~\beta \in \F_{2^h},~{\rm s.t.}~\beta ^2+a\beta =\alpha ^3+b\alpha +c\}.
\end{align}
For any $\alpha_i \in S_{a,b,c}$, it gives exactly two points with $x$-component $\alpha_i $ and 
we denote these two points corresponding to $\alpha_i $ by $P_i ^{(1)}$ and $P_i ^{(2)}$.
Then the set of all rational points of $\mathcal{E} _{a,b,c}$ over $\F_q$ is 
$\{P_i ^{(1)}\mid \alpha_i \in S_{a,b,c}\} \cup \{P_i ^{(2)}\mid \alpha_i \in S_{a,b,c}\} \cup \{P_\infty\}$.
Clearly, the numbers of rational points of the curve $\mathcal{E} _{a,b,c}$ is $N(\mathcal{E}_{a,b,c})=2|S_{a,b,c}|+1$, where $|S_{a,b,c}|$ denotes the cardinality of $S_{a,b,c}$.

\begin{theorem}\label{t5}
    Let $q=2^h$, $2e \mid h $ and $E=\{x^{2^e+1} \mid x\in \F_q^*\}$. 
    Put $U_1=\{\alpha \in E \mid Tr_{\F_q/\F_2}(\alpha ^3)=0\}$ 
    and $U_2=\{\alpha \in E \mid Tr_{\F_q/\F_2}(\alpha+\alpha^3)=0\}$, 
    where $Tr_{\F_q/\F_2}$ is the trace mapping from $\F_q$ to $\F_2$. 
    Then the following statements hold. 
    \begin{enumerate}
        \item [\rm (1)] For any $1 \leq n \leq |U_1|$ and $3 \leq k \leq \lfloor \frac{2n+2^e+1}{2^e+1}\rfloor $, 
        there exists an $[2n,k-1, \geq 2n-k+1]_q$ $e$-Galois SO AG code. 

        \item [\rm(2)] For any $1 \leq n \leq |U_2|$ and $3 \leq k \leq \lfloor \frac{2n+2^e+1}{2^e+1}\rfloor $, 
        there exists an $[2n,k-1, \geq 2n-k+1]_q$ $e$-Galois SO AG code. 
    \end{enumerate}
\end{theorem}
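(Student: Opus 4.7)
The plan is to specialize Lemma~\ref{t1} (with genus $g=1$) to the two elliptic curves $\mathcal{E}_1: y^2+y=x^3$ for part~(1) and $\mathcal{E}_2: y^2+y=x^3+x$ for part~(2). By Hilbert~90 (Lemma~\ref{l4}), an element $\alpha\in\F_q$ appears as the $x$-coordinate of an affine rational point on $\mathcal{E}_1$ iff $Tr_{\F_q/\F_2}(\alpha^3)=0$, and on $\mathcal{E}_2$ iff $Tr_{\F_q/\F_2}(\alpha+\alpha^3)=0$; intersecting the set of such $x$-coordinates with $E=\{x^{2^e+1}\mid x\in\F_q^*\}$ recovers $U_1$ and $U_2$ exactly. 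Because the $x$-projection $\mathcal{E}_j\to\mathbb{P}^1$ is unramified in characteristic $2$ (the condition $2\beta+1=0$ has no solutions), each $\alpha_i\in U_j$ yields two distinct rational points $P_i^{(1)},P_i^{(2)}$, giving $2n$ evaluation points in total.

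Given $\alpha_1,\ldots,\alpha_n\in U_j$, I would set $D=\sum_{i=1}^{n}\bigl(P_i^{(1)}+P_i^{(2)}\bigr)$, $G=(k-1)P_\infty$, $f(x)=\prod_{i=1}^{n}(x-\alpha_i)$ and take the Weil differential $w=dx/f(x)$. Using the principal divisor $(x-\alpha_i)=P_i^{(1)}+P_i^{(2)}-2P_\infty$ together with the fact that in characteristic $2$ with $a=1$ the one-form $dx$ is the invariant regular differential on $\mathcal{E}_j$ (the would-be leading term $-2t^{-3}dt$ in the local expansion at $P_\infty$ vanishes, yielding $(dx)=0$), I would compute $(w)=-D+2nP_\infty$. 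Hence $H=D-G+(w)=(2n-k+1)P_\infty$ is disjoint from $\supp(D)$, the pole condition $v_{P_i^{(s)}}(w)=-1$ holds automatically, and $p^eG\le H$ collapses to $2^e(k-1)\le 2n-k+1$, i.e.\ $k\le\lfloor(2n+2^e+1)/(2^e+1)\rfloor$, which together with $k\ge 3=2g+1$ matches the theorem's range.

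The step I expect to be the main obstacle is verifying the residue hypothesis of Lemma~\ref{t1}. Local computation in the uniformizer $t=x-\alpha_i$ yields $\Res_{P_i^{(s)}}(w)=1/f'(\alpha_i)$, the same for both $s=1,2$, and one must show each $f'(\alpha_i)=\prod_{j\ne i}(\alpha_i-\alpha_j)$ lies in $E$ so that some $v_{i,s}\in\F_q^*$ with $v_{i,s}^{2^e+1}=1/f'(\alpha_i)$ can be chosen. The hypothesis $2e\mid h$ combined with Lemma~\ref{l2} makes $E$ a multiplicative subgroup of index $2^e+1$ containing $\F_{2^e}^*$, and each $\alpha_i\in U_j$ is already a $(2^e+1)$-th power; the trace hypothesis, via Hilbert~90, supplies $\beta_i\in\F_q$ with $\beta_i^2+\beta_i=\alpha_i^3$ (case~1) or $\beta_i^2+\beta_i=\alpha_i^3+\alpha_i$ (case~2), which in characteristic $2$ factors $\alpha_i^3-\alpha_j^3$ as $(\beta_i-\beta_j)(\beta_i-\beta_j+1)$ and, via $\alpha_i^3-\alpha_j^3=(\alpha_i-\alpha_j)(\alpha_i^2+\alpha_i\alpha_j+\alpha_j^2)$, gives an algebraic handle on the differences $\alpha_i-\alpha_j$. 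Combining these constraints to force the product $f'(\alpha_i)$ into $E$ is the technical heart of the proof; once it is settled, Lemma~\ref{t1} with $g=1$ immediately delivers the asserted $[2n,k-1,\ge 2n-k+1]_q$ $e$-Galois SO AG code, proving both (1) and (2).
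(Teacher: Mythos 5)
Your construction is the same as the paper's: both specialize Lemma \ref{t1} with $g=1$ to the curves $y^2+y=x^3$ and $y^2+y=x^3+x$, take $D=\sum_i(P_i^{(1)}+P_i^{(2)})$, $G=(k-1)P_\infty$ and $w=dx/h(x)$ with $h(x)=\prod_i(x-\alpha_i)$, obtain $H=(2n-k+1)P_\infty$ and $\Res_{P_i^{(s)}}(w)=1/h'(\alpha_i)$, and convert $2^eG\le H$ into the stated bound on $k$; all of those computations in your write-up are correct.

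The genuine gap is the one you yourself flag: you never establish that $h'(\alpha_i)=\prod_{j\ne i}(\alpha_i-\alpha_j)$ lies in $E$, which is exactly the hypothesis $\Res_{P_i}(w)=v_i^{p^e+1}$ that Lemma \ref{t1} needs before it ``immediately delivers'' anything. The ideas you sketch do not close it: $\alpha_i\in E$ is a multiplicative condition on each $\alpha_i$ individually and says nothing about the differences $\alpha_i-\alpha_j$, while the Artin--Schreier factorization $\alpha_i^3-\alpha_j^3=(\beta_i-\beta_j)(\beta_i-\beta_j+1)$ controls $\alpha_i^2+\alpha_i\alpha_j+\alpha_j^2$ only up to the unknown factor $\alpha_i-\alpha_j$, so it is circular as a handle on that difference. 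The paper closes this step by an entirely different and much cruder observation: combining Lemma \ref{l5} with the claim $\gcd(2^{h/2}-1,\frac{q-1}{2^e+1})=2^e-1$, it places the chosen abscissas $\alpha_j$ inside the subfield $\F_{2^e}$, whence every difference and every product $h'(\alpha_i)$ lies in $\F_{2^e}^*$, which is contained in $E$ by Lemma \ref{l2} since $2e\mid h$. If you adopt that fix, note that it silently restricts the evaluation set to subfield elements of $U_1$ (resp.\ $U_2$) rather than arbitrary subsets of size $n\le|U_j|$; so the residue condition is not a formality to be deferred --- it is precisely where the real constraint on the point set enters, and as written your proposal does not prove the theorem.
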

\begin{proof}
    (1) Consider the projective elliptic curve $\mathcal{E} _{1,0,0}:y^2+y=x^3.$ 
    According to Lemma \ref{l4}, $U_1$ is a subset of $S_{1,0,0}$ defined in Equation (\ref{eq.S_{a,b,c}}). 
    Put $U_0 =\{\alpha _1,\alpha _2,\ldots,\alpha _n\}\subseteq U_1$ and let $h(x)=\prod _{i=1}^n(x-\alpha_i)$ and $h'(x)=\sum_{j= 1}^{n} \prod _{i=1,i \neq j}^n(x-\alpha_i)$.
    We have $\gcd(p^{\frac{h}{2}}-1,\frac{q-1}{p^e+1})=p^e-1$ since $2e \mid h$.
    It follows from Lemma \ref{l5} that $\alpha_j \in \F_{p^e}$ for $1 \leq j \leq n$. 
    Then, by Lemma \ref{l2}, $h'(\alpha_j)= \prod _{i=1,i \neq j}^n(\alpha_j-\alpha_i) \in \F_{p^e} \subseteq E$, for $1 \leq j \leq n$.
    Hence, there exists $\beta_j \in \F_q^*$ such that $h'(\alpha _j)=\beta_j ^{p^e+1} $, for each $1 \leq j \leq n$.
    
    Let $P_i =P_i ^{(1)}+P_i^{(2)}$ for $1\leq i\leq n$. 
    Set $G=(k-1)P_\infty$, $D=(h(x))_0=P_1+P_2+\cdots+P_n$ and $w=\frac{dx}{h(x)}$. 
    Then $H=D-G+(w)=(2n)P_\infty-G$ and ${\Res}_{P_i^{(1)}}(w)={\Res}_{P_i^{(2)}}(w)=\frac{1}{h'(\alpha _i)} =\frac{1}{\beta _i^{2^e+1}}$, where $1 \leq i \leq n$.  
    Since $3 \leq k \leq \lfloor \frac{2n+2^e+1}{2^e+1}\rfloor $, we have $2^eG\leq H$. 
    Taking $\bm{v}=(\beta _1^{-1},\beta _1^{-1}, \beta _2^{-1},\beta _2^{-1},\ldots,\beta _n^{-1}, \beta _n^{-1})$, the desired result follows from Lemmas \ref{p.parameters} and \ref{l1}. 
    This completes the proof of the result (1).  

    (2) For the desired result (2), we consider the projective elliptic curve $\mathcal{E} _{1,1,0}: y^2+y=x^3+x$ 
    and the left proof is similar to the proof of (1) above. 
\end{proof}


Next, we consider the projective hyper-elliptic curve with genus $g=\frac{\sqrt{q}}{2}$ over $\F_q$ with $q=2^h$, 
which is given by the equation 
\begin{align}\label{eq.hyper-elliptic}
    \mathcal{H} :y^2+y=x^{\sqrt{q}+1}. 
\end{align}
From the discussion in \cite[Proposition 4.1]{F.qso}, the number of rational points of $\mathcal{H}$ is
\begin{align}\label{eq. numbers of hyper-elliptic rational points}
    N(\mathcal{H} )=2(\gcd((\sqrt{q}+1)(\sqrt{q}-1),q-1)+1)+1=2q+1.   
\end{align}
Noting that for any $\alpha_i \in \F_q$, there also exist two points 
$P_i^{(1)}$ and $P_i^{(2)}$ with $x$-component $\alpha_i$.
Then we have the following result. 
\begin{theorem}\label{t6}
    Let $q=2^h\geq 4$, where $h$ is even. 
    Let $N(\mathcal{H})$ be defined as in Equation (\ref{eq. numbers of hyper-elliptic rational points}). 
    Let $(n-1) \mid  (q-1)$ and $2n \leq N(\mathcal{H})-1$. 
    Then for any $1+\sqrt{q} \leq k \leq \lfloor \frac{2n+2^e+\sqrt{q}-1}{2^e+1}\rfloor $, 
    there exists an $[ 2n,k-\frac{\sqrt{q}}{2}, \geq 2n-k+1]_q$ $e$-Galois SO AG code.
      
\end{theorem}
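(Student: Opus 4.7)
The plan is to apply Lemma \ref{t1} on the hyper-elliptic curve $\mathcal{H}: y^2+y=x^{\sqrt{q}+1}$, mirroring the strategy of Theorem \ref{t5} but with genus $g=\sqrt{q}/2$ instead of $g=1$. Since $q=2^h$, $q-1$ is odd, so the hypothesis $(n-1)\mid(q-1)$ forces $n-1$ to be odd, i.e., $n$ is even. I would choose the $x$-coordinates to be $U_0=\{0\}\cup\mu_{n-1}=\{\alpha_1,\ldots,\alpha_n\}\subseteq\F_q$, where $\mu_{n-1}$ is the group of $(n-1)$-th roots of unity. For each $\alpha_i\in U_0$, the norm $\alpha_i^{\sqrt{q}+1}$ lies in $\F_{\sqrt{q}}$; by Lemma \ref{l5} (which requires $h$ even) the Artin--Schreier equation $y^2+y=\alpha_i^{\sqrt{q}+1}$ is solvable in $\F_q$, giving two rational points $P_i^{(1)},P_i^{(2)}$ above $\alpha_i$. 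I set $D=\sum_{i=1}^{n}(P_i^{(1)}+P_i^{(2)})$ (so $\deg D=2n$), $G=(k-1)P_\infty$, and $w=dx/h(x)$ with $h(x)=\prod_{i=1}^n(x-\alpha_i)=x(x^{n-1}-1)=x^n-x$.

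The key observation is that, in characteristic $2$ with $n$ even, the formal derivative collapses to $h'(x)=nx^{n-1}-1=1$, so $h'(\alpha_i)=1=1^{2^e+1}\in E$ for every $i$. Using $v_{P_\infty}(x)=-2$ (since $P_\infty$ is the unique, totally ramified point of $x:\mathcal{H}\to\mathbb{P}^1$) and $(dx)=(2g-2)P_\infty=(\sqrt{q}-2)P_\infty$ (as $dx$ has no finite zeros or poles), a direct bookkeeping gives
\begin{equation*}
(w)=(2n+\sqrt{q}-2)P_\infty-D,\qquad H:=D-G+(w)=(2n+\sqrt{q}-1-k)P_\infty.
\end{equation*}
The condition $p^eG\le H$ then rearranges to $k\le(2n+2^e+\sqrt{q}-1)/(2^e+1)$, which is exactly the upper bound in the hypothesis. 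Since $v_{P_i^{(j)}}(w)=-1$ and $\Res_{P_i^{(j)}}(w)=1/h'(\alpha_i)=1$, the choice $\bm{v}=(1,1,\ldots,1)\in(\F_q^*)^{2n}$ makes $\Res_{P_i^{(j)}}(w)=v_i^{p^e+1}$, so every hypothesis of Lemma \ref{t1} holds and $C_{\mathcal L}(D,G,\bm v)$ is $e$-Galois SO. The parameters then follow from Lemma \ref{p.parameters}: the lower bound $k\ge 1+\sqrt{q}$ ensures $\deg(G)=k-1>2g-2$, giving dimension $\deg(G)+1-g=k-\sqrt{q}/2$, and $d\ge 2n-(k-1)=2n-k+1$.

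The main technical step I expect to pause over is justifying the two geometric identities $v_{P_\infty}(x)=-2$ and $(dx)=(\sqrt{q}-2)P_\infty$ on $\mathcal{H}$. These come from the fact that $y^2+y=x^{\sqrt{q}+1}$ is an Artin--Schreier extension of $\F_q(x)$ ramified only at $\infty$ (and totally so, because $\sqrt{q}+1$ is odd), combined with the Riemann--Hurwitz formula that yields $g=\sqrt{q}/2$. This is where the genus-dependent input truly enters, in contrast to the elliptic case of Theorem \ref{t5} where the canonical divisor contributes only $0\cdot P_\infty$. Once these inputs are in hand, the rest of the argument is essentially a translation of the Theorem \ref{t5} proof to the higher-genus setting.
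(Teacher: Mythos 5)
Your proposal is correct and follows essentially the same route as the paper's proof: same curve, same set $U=\{\alpha\in\F_q\mid\alpha^n=\alpha\}$, same $h(x)=x^n-x$ with $h'\equiv 1$, same $w=dx/h(x)$, $H=(2n+\sqrt{q}-k-1)P_\infty$, and $\bm{v}=(1,\ldots,1)$, concluding via Lemmas \ref{t1} and \ref{p.parameters}. The extra details you supply (that $(n-1)\mid(q-1)$ forces $n$ even so $h'=1$ in characteristic $2$, the lifting of each $\alpha_i$ via Lemmas \ref{l4} and \ref{l5}, and the computation of $(dx)=(\sqrt{q}-2)P_\infty$) are points the paper leaves implicit, and they are all correct.
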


\begin{proof}
    We consider the projective hyper-elliptic curve $\mathcal{H}$ defined in Equation (\ref{eq.hyper-elliptic}).
    Put $U=\{\alpha \in \F_q\mid \alpha ^n=\alpha \}=\{ \alpha _1, \alpha _2, \ldots,\alpha _n\}$
    and let $h(x)=\prod _{i=1}^{n}(x-\alpha _i )=x^n-x$. Then for any $\alpha _i \in U$, we have $h'(\alpha _i)=1$, where $1 \leq i \leq n$. 
    
    Let $P_i=P_i^{(1)}+P_i^{(2)}$ for $1\leq i\leq n$. 
    Set $G=(k-1)P_\infty$, $D=(h(x))_0=P_1+P_2+\cdots+P_n$ and $w=\frac{dx}{h(x)}$. 
    Then $H=D-G+(w)=(2n+\sqrt{q}-k-1)P_\infty$ and 
    ${\Res}_{P_i}^{(1)}(w)={\Res}_{P_i}^{(2)}(w)=1$.       
    Since $1+\sqrt{q} \leq k \leq \lfloor \frac{2q+2^e+\sqrt{q}-1}{2^e+1}\rfloor $, then $2^eG \leq H$. 
    By taking $\bm{v}=(1,1,\ldots,1)$, it follows from Lemmas \ref{p.parameters} and \ref{l1} that the desired result holds. 
    This completes the proof. 
\end{proof}


Finally, we employ projective Hermitian curves to construct Galois SO AG codes.  
Let $q=p^h$, where $h$ is even. Consider the projective Hermitian curve $\mathcal{T}$ 
with genus $g=\frac{q-\sqrt{q}}{2}$ given by 
\begin{align}
    \mathcal{T} :y^{\sqrt{q}}+y=x^{\sqrt{q}+1}.
\end{align}
Note that for any $\alpha_i \in \F_q$, there exist $\sqrt{q}$ rational points 
$P_i ^{(1)},P_i ^{(2)},\ldots, P_i ^{(\sqrt{q})}$ with $x$-component $\alpha_i$ \cite{F.qso}. 
Hence, the number of rational points of $\mathcal{T}$ is $N(\mathcal{T})=\sqrt{q}\cdot q+1=q^{\frac{3}{2}}+1$. 

\begin{theorem}\label{t7}
    Let $q=p^h$ and $2e \mid  h$, where $p$ is an odd prime and $0\leq e\leq h-1$. 
    If one of the following conditions holds: 
    \begin{enumerate}
        \item[\rm(1)] $n=tp^{aw}$, where $a \mid e$, $1 \leq t \leq p^a$ and $1 \leq w \leq \frac{h}{a}-1$; 
        \item[\rm(2)] $n=tp^{h-e}$, where $1 \leq t \leq p^e$;  
        \item[\rm(3)] $n=(t+1)\frac{q-1}{p^e+1}+1$, where $1 \leq t \leq p^e$ and $(p^{2e}-1) \mid \frac{q-1}{p^e+1}$;
        \item[\rm(4)] $n=\frac{t(q-1)}{p^e-1}+a$, where $a=\{0,1\}$ and $1 \leq t \leq p^e-1$;         
        \item[\rm(5)] $n=\frac{r(q-1)}{\gcd(q-1,x_2)}+a$, where $a=\{0,1\}$, $1 \leq r \leq  \frac{q-1}{\gcd(q-1,x_1)}$, $(q-1) \mid  {\lcm}(x_1,x_2)$ and ${\gcd}(x_2,q-1) \mid  x_1(p^e-1)$;         
        \item[\rm(6)] $n=rm+a$, where $a=\{0,1\}$, $m\mid (q-1)$, $1 \leq r \leq \frac{p^e-1}{m_1}$, $m_1=\frac{m}{\gcd(m,y)}$ and $y=\frac{q-1}{p^e-1}$,          
    \end{enumerate}
    then for any $1+q-\sqrt{q} \leq k \leq \lfloor \frac{\sqrt{q}n+p^e+q-\sqrt{q}-1}{p^e+1}\rfloor $, 
    there exists an $[\sqrt{q}n,k-\frac{q-\sqrt{q}}{2},\geq \sqrt{q}n-k+1]_q$ $e$-Galois SO AG code,
\end{theorem}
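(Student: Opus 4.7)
The plan is to invoke Lemma \ref{t1} on the projective Hermitian curve $\mathcal{T}: y^{\sqrt{q}} + y = x^{\sqrt{q}+1}$ of genus $g = \frac{q-\sqrt{q}}{2}$. Above each $\alpha \in \F_q$ lie exactly $\sqrt{q}$ rational points $P_\alpha^{(1)}, \ldots, P_\alpha^{(\sqrt{q})}$, and since the $x$-coordinate map is unramified at each of them, $x-\alpha$ is a local uniformizer there.

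First I would pin down the canonical divisor. Differentiating $y^{\sqrt{q}}+y = x^{\sqrt{q}+1}$ in characteristic $p$, and using $p \mid \sqrt{q}$ together with $\sqrt{q}+1 \equiv 1 \pmod p$, yields $dy = x^{\sqrt{q}} dx$. Since $v_{P_\infty}(y) = -(\sqrt{q}+1)$ is coprime to $p$, one obtains $v_{P_\infty}(dy) = -\sqrt{q}-2$, and therefore $v_{P_\infty}(dx) = v_{P_\infty}(dy) - \sqrt{q}\,v_{P_\infty}(x) = q - \sqrt{q} - 2$. Because $dx$ has no zeros or poles at the finite rational places, $(dx) = (q - \sqrt{q} - 2)P_\infty$, which agrees with the expected degree $2g-2$.

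Next, for each of the six length families, the projective-line literature \cite{caomeng,liyang.gso} together with Theorem \ref{t3} of this paper supplies a squarefree polynomial $h(x) = \prod_{i=1}^n (x-\alpha_i)$ with $\alpha_i \in \F_q$ pairwise distinct such that $h'(\alpha_i) \in E := \{x^{p^e+1} : x \in \F_q^*\}$ for every $i$; concretely, the $\alpha_i$'s are selected as a union of cosets of suitable multiplicative subgroups of $\F_q^*$ (possibly adjoined with $0$), and the required $(p^e+1)$-th power structure is controlled by Lemmas \ref{l2} and \ref{l3}. Write $h'(\alpha_i) = \beta_i^{p^e+1}$ with $\beta_i \in \F_q^*$.

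Finally, set $D := \sum_{i=1}^n \sum_{j=1}^{\sqrt{q}} P_{\alpha_i}^{(j)}$, $G := (k-1)P_\infty$, $w := dx/h(x)$, and let $\bm{v}$ be the length-$\sqrt{q}n$ vector whose entry at $P_{\alpha_i}^{(j)}$ equals $\beta_i^{-1}$. A short computation using the canonical divisor gives $(w) = (\sqrt{q}n + q - \sqrt{q} - 2)P_\infty - D$, so each $v_{P_{\alpha_i}^{(j)}}(w) = -1$, each residue $\mathrm{Res}_{P_{\alpha_i}^{(j)}}(w) = 1/h'(\alpha_i) = (\beta_i^{-1})^{p^e+1}$, and $H := D - G + (w) = (\sqrt{q}n + q - \sqrt{q} - k - 1) P_\infty$ is supported only at $P_\infty$, disjoint from $\mathrm{supp}(D)$. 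The inequality $p^e G \leq H$ reduces to $k \leq \lfloor (\sqrt{q}n + p^e + q - \sqrt{q} - 1)/(p^e+1) \rfloor$, and the hypothesis $k \geq 2g+1 = 1 + q - \sqrt{q}$ matches the stated lower bound. Lemma \ref{t1} then delivers an $[\sqrt{q}n,\, k-g,\, \geq \sqrt{q}n - k + 1]_q$ $e$-Galois self-orthogonal AG code, as required. The principal obstacle is the case-by-case verification that each of the six families admits an $h(x)$ with $h'(\alpha_i) \in E$; this is precisely what the corresponding projective-line constructions establish, so it can be quoted rather than redone from scratch.
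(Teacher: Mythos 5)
Your proposal is correct and follows essentially the same route as the paper: pull back the polynomials $h(x)$ from the six projective-line constructions to the Hermitian curve, use $w=dx/h(x)$ so that the residues at the $\sqrt{q}$ points above each $\alpha_i$ are $(p^e+1)$-th powers, and apply Lemma \ref{t1} with $H=(\sqrt{q}n+q-\sqrt{q}-k-1)P_\infty$. Your explicit verification that $(dx)=(q-\sqrt{q}-2)P_\infty$ is a welcome detail the paper leaves implicit, but it does not change the argument.
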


\begin{proof}
    Let $h(x)=\prod _{\alpha \in U}(x-\alpha)$, where $U$ is taken as one of the following cases: 
    \begin{enumerate}
        \item[\rm(1)] Put $U=\cup_{j=1}^{t}U_j$, where $U_j=K+\beta _i\eta $ is from \cite[Theorem III.10]{caomeng};
        \item[\rm(2)] Put $U=\cup_{i=1}^tU_i$, where $U_i=\{\alpha \in \F_q\mid Tr(x)=b_i\}$ for some $b_i \in \F_q$ is from \cite[Theorem 3.2]{liyang.gso}; 
        \item[\rm(3)] Put $U=U_N \cup \{\cup_{i=1}^t\alpha _iU_N\} \cup \{0\}$ 
        where $U_N=\{\alpha \in \F_q\mid \alpha ^N=1 \}$ is from Theorem \ref{t3}; 

       \item[\rm(4)] Put $U=\cup_{i=1}^tU_i$ if $a=0$, where $U_i=\{\alpha \in \F_q^*\mid {\rm Norm}(x)= x^{\frac{q-1}{p^e-1}}= b_i\}$ 
        for some $b_i \in \F_q^*$ is from \cite[Theorem 3.5]{liyang.gso} and 
        put $U=\cup_{i=1}^tU_i \cup \{0\}$ if $a=1$, where $U_i=\{\alpha \in \F_q^*\mid {\rm Norm}(x)= x^{\frac{q-1}{p^e-1}}= b_i\}$ for some $b_i \in \F_q^*$ 
        is from \cite[Theorem III.2]{caomeng}; 
        
        \item[\rm(5)] Put $U=\cup_{i=1}^rU_i$ if $a=0$, where $U_i=\{\xi_1^i\xi_2^j\mid j=1,2,\ldots, r_2\}$, 
        $\xi_1=w^{x_1}$ and $\xi_2=w^{x_2}$ are from \cite[Theorem 3.9]{liyang.gso} and 
        put $U=\cup_{i=1}^rU_i\cup\{0\}$ if $a=1$, where $U_i=\{\xi_1^i\xi_2^j\mid j=1,2,\ldots, r_2\}$, 
        $\xi_1=w^{x_1}$ and $\xi_2=w^{x_2}$ are from \cite[Theorem III.4]{caomeng}; 

        \item[\rm(6)] 
        
        Put $U=\cup_{i=1}^r\eta _iH$ if $a=0$, where $H=\langle\theta _1\rangle$, $V=\langle\theta _2\rangle$, 
        $\theta _1=w^{\frac{q-1}{m}}$, $\theta _2=w^{\frac{y}{m_2}}$ and $\eta _i$ is the left coset representative 
        of $V/H$ for $i=1,2,\ldots,\frac{p^e-1}{m_1}$ from \cite[Theorem 3.13]{liyang.gso} 
        and put $U=\cup_{i=1}^r\eta _iH$ if $a=1$, where 
        $H=\langle\theta _1\rangle$, $V=\langle\theta _2\rangle$, $\theta _1=w^{\frac{q-1}{m}}$, $\theta _2=w^{\frac{y}{m_2}}$ 
        and $\eta _i$ is the left coset representative of $V/H$ for $i=0,1,\ldots,\frac{p^e-1}{m_1}$ from \cite[Theorem III.8]{caomeng}.
    \end{enumerate}
    One can check that for any $\alpha _i \in U$, there exists $\beta _i\in \F_q^*$ such that $h'(\alpha _i)={\beta _i^{p^e+1}}$ for $1\leq i\leq |U|$.  
        
    Let $P_i=P_i ^{(1)}+P_i ^{(2)}+\cdots+P_i ^{(\sqrt{q})}$, where $1\leq i\leq n$. 
    Set $G=(k-1)P_\infty$, $D=(h(x))_0=P_1+P_2+\cdots+P_n$ and  $w=\frac{dx}{h(x)}$.  
    Then $H=D-G+(w)=(\sqrt{q}n+q-\sqrt{q}-k-1)P_\infty$ and ${\Res}_{P_i}^{(1)}(w)={\Res}_{P_i}^{(2)}(w)=\cdots={\Res}_{P_i}^{(\sqrt{q})}(w)=\frac{1}{h'(\alpha _i)} =\frac{1}{ \beta _i^{p^e+1}}$.   
    Since $1+q-\sqrt{q} \leq k \leq \lfloor \frac{\sqrt{q}n+p^e+q-\sqrt{q}-1}{p^e+1}\rfloor $, one has $p^eG\leq H$. 
    By taking $\bm{v}=(\underbrace{\beta _1^{-1},\ldots,\beta _1^{-1}}_{\sqrt{q}}, \underbrace{\beta _2^{-1},\ldots, \beta _2^{-1}}_{\sqrt{q}}, \ldots, \underbrace{\beta _n^{-1},\ldots,\beta _n^{-1}}_{\sqrt{q}})$, 
    the desired results then follows from Lemmas \ref{p.parameters} and \ref{l1}.
    This completes the proof. 
\end{proof}

\begin{remark}  
    We give some comparison results between Theorems \ref{t5}, \ref{t6}, \ref{t7} and some known conclusions.
    \begin{itemize}
        \item [\rm (1)] From special projective elliptic curves and projective hyper-elliptic curves over $\F_{2^h}$ with even $h$, 
        we derive two new classes of Galois SO AG codes in Theorems \ref{t5} and \ref{t6}.  
        On one hand, considering $e=0$, 
        Theorems \ref{t5} and \ref{t6} respectively yield two classes of Euclidean self-dual AG codes 
        with  parameters $[2n,n,\geq n]_{2^h}$ and $[2n,n,\geq n-\frac{\sqrt{q}}{2}+1]_{2^h}$, 
        which are the same with \cite{sok.2021 sd}.
        Hence, Theorems \ref{t5} and \ref{t6} indeed generalize the results in \cite{sok.2021 sd}. 
        On the other hand, comparing with Table \ref{tab:1}, we find Class 17 has similar conditions with Theorem \ref{t6}.
        However, the length of the codes in our construction are twice than Class 17 over $\F_{2^h}$, where $h$ is even.
        Hence, the Galois SO codes with lengths larger than $q$ from Theorem \ref{t6} are new.

        \item [\rm (2)] Theorem \ref{t7} contains nine explicit constructions of Galois SO AG codes and 
        the lengths of these codes are $\sqrt{q}$ times of those of Galois SO codes constructed in \cite{caomeng,liyang.gso} and Theorem \ref{t3}.
        Note also that for $e$-Galois SO AG codes constructed in Theorem \ref{t7}\ (3), the value of $e$ much be less than $\frac{h}{2}$ since $(p^{2e}-1) \mid \frac{q-1}{p^e+1}$.
        As a result, if we take $e=\frac{h}{2}$ with even $h$, we can immediately get eight classes of Hermitian SO AG codes with lengths larger than $q$.  
        As far as we know, the derived Hermitian SO AG codes with lengths larger than $q$ are new. 
    \end{itemize}
\end{remark}


\begin{example}\label{e}
    Take $p=2$ and $h=8$, then all cases satisfying $2e \mid h$ are $e=1$, $e=2$ and $e=4$.
    Note that in this case, $4$-Galois SO codes are just Hermitian self-orthogonal codes.
    For the general $e$-Galois SO codes, we list some $1$-Galois SO AG codes and $2$-Galois SO AG codes as follows.
    \begin{enumerate}
        \item[\rm(1)] For $e=1$, we obtain 
        $[36,9,\geq 20]_{2^8}$, 
        $[104,k-8,\geq 105-k]_{2^8}\ (17 \leq k \leq 40)$, 
        $[172,k-8, \geq 173-k ]_{2^8}\ (17 \leq k \leq 63)$, $[512,k-8, \geq 513-k]_{2^8}\ (17 \leq k \leq 176)$ 
        $1$-Galois SO AG codes from Theorem \ref{t6}.

        \item[\rm(2)] For $e=2$, we obtain 
        $[104,k-8,\geq 105-k]_{2^8}\ (17 \leq k \leq 24)$, 
        $[172,k-8, \geq 173-k ]_{2^8}\ (17 \leq k \leq 38)$, $[512,k-8, \geq 513-k]_{2^8}\ (17 \leq k \leq 106)$
        $2$-Galois SO AG codes from Theorem \ref{t6}.
        
    \end{enumerate}
\end{example}

\begin{example}
    Take $p=3$ and $h=8$. Similar to Example \ref{e}, we only list some $1$-Galois SO AG codes and $2$-Galois SO AG codes as follows.
    \begin{enumerate}
        \item[\rm(1)] For $e=1$, we obtain $[177147,k-3240,\geq 17148-k]_{3^8}\ (6841 \leq k \leq 45907)$,
                      $[354294,k-3240,\geq 354295-k]_{3^8}\ (6841 \leq k \leq 90194)$,
                      $[531441,k-3240, \geq 531442-k]_{3^8}\ (6841 \leq k \leq 134480)$
                      $1$-Galois SO AG codes from Theorem \ref{t7}.

        \item[\rm(2)] For $e=2$, we obtain $[59049,k-3240,59050-k]_{3^8}\ (6841 \leq k \leq 6553)$,
                      $[118098,k-3240,\geq 118099-k]_{3^8}\ (6841 \leq k\leq 12458)$,
                      $[177147,k-3240,\geq 177148-k]_{3^8}\ (6841 \leq k\leq 18363)$,
                      $[236196,k-3240,\geq 236197-k]_{3^8}\ (6841 \leq k\leq 24268)$,
                      $[295245,k-3240,\geq 295246-k]_{3^8}\ (6841 \leq k\leq 30173)$,
                      $[354294,k-3240,\geq 354295-k]_{3^8}\ (6841 \leq k\leq 36078)$,
                      $[413343,k-3240,\geq 413344-k]_{3^8}\ (6841 \leq k\leq 41983)$,
                      $[472392,k-3240,\geq 472393-k]_{3^8}\ (6841 \leq k\leq 47888)$,
                      $[531441,k-3240,\geq 531442-k]_{3^8}\ (6841 \leq k\leq 53792)$
                      $2$-Galois SO AG codes from Theorem \ref{t7}.
    \end{enumerate}

\end{example}

\section{Conclusions}\label{sec5}
  In this paper, we focus on the gap on the existence and constructions of general Galois SO AG codes. 
  We give a criterion that determines when an AG code becomes Galois SO. 
  Based on this criterion, we get many new (MDS) Galois SO AG codes from projective lines 
  and some special projective curves. An embedding method that allows us to derive new MDS Galois 
  SO codes from known MDS Galois SO AG codes constructed from projective lines is also presented. 
  By some specific comparisons, one can see that our constructions are new with respect to general Galois inner products 
  and generalize some results in \cite{sok.2021 sd,sok.2021 hso} with respect to special Euclidean and Hermitian inner products.

  Note that the so-called Galois hulls of generalized Reed-Solomon codes and Goppa codes 
  have been deeply studied in \cite{caomeng,FXL,Goppa.2023,liyang.gso,liyang.MDS,Goppa.2022}.  
  Recall that generalized Reed-Solomon codes and Goppa codes are two special subclasses of AG codes. 
  Hence, it would be interesting to study the Galois hulls of general AG codes and their performance in 
  constructing entanglement-assisted quantum error-correcting codes in the future.

  \section*{Declarations}

  \noindent\textbf{Data availability} No data are generated or analyzed during this study.  \\
  
  \noindent\textbf{Conflict of Interest} The authors declare that there is no possible conflict of interest. 
 
  \section*{Acknowledgments}
  This research is supported by the National Natural Science Foundation of China under Grant No. 12171134 and U21A20428.

\end{sloppypar}

\begin{thebibliography}{99}
    \addtolength{\itemsep}{0 em} 
    \setlength{\itemsep}{0 pt}
    \begin{footnotesize}



\bibitem{t-design} E.F. Assmus, H.F. Mattson, New 5-designs, J. Combinat. Theory \textbf{6} (2) (1969) 122-151.
       

\bibitem{caomeng}  M. Cao, MDS codes with Galois hulls of arbitrary dimensions and the related entanglement-assisted quantum error correction, IEEE Trans. Inf. Theory \textbf{67} (12) (2021) 7964-7984.
\bibitem{caomeng1} M. Cao, Several new families of MDS EAQECCs with much larger dimensions and related application to EACQCs, Quantum Inf. Process. \textbf{22} (2023) 447.

\bibitem{chenhao.3} H. Chen, Some good quantum error-correcting codes from algebraic-geometric codes, IEEE Trans. Inf. Theory \textbf{47} (5) (2001) 2059-2061.
\bibitem{chenhao.2}  H. Chen, On the hull-variation problem of equivalent linear codes, IEEE Trans. Inf. Theory \textbf{69} (5) (2023) 2911-2922.
\bibitem{chenhao.1} H. Chen, S. Ling, C. Xing, Quantum codes from concatenated algebraic-geometric codes, IEEE Trans. Inf. Theory \textbf{51} (8) (2005) 2915-2920.
\bibitem{Ls.2023 hull}   B. Chen, S. Ling, H. Liu, Hulls of Reed-Solomon codes via algebraic geometry codes, IEEE Trans. Inf. Theory \textbf{69} (2) (2023) 1005-1014.

\bibitem{lattice}  J.H. Conway, N.J.A. Sloane, Sphere Packings, Lattices and Groups. 3rd edn. Springer, New York, 1998.

\bibitem{Galois.1}  Y. Fan, L. Zhang, Galois self-dual constacyclic codes, Des. Codes Cryptogr. \textbf{84} (3) (2017) 473-492.
\bibitem{FXL}  X. Fang, R. Jin, J. Luo, W. Ma, New Galois hulls of GRS codes and application to EAQECCs, Cryptogr. Commun. \textbf{14} (1) (2022) 145-159.

\bibitem{con.3}  Y. Fu, H. Liu, Galois self-orthogonal constacyclic codes over finite fields, Des. Codes Cryptogr. \textbf{90} (2022) 2703-2733.

\bibitem{AGcodes}  V.D. Goppa, Codes on algebraic curves, Doklady Akademii Nauk, Russian Academy of Sciences. \textbf{259} (6) (1981) 1289-1290. 
\bibitem{Goppa.4}  V.D. Goppa, Algebraic-geometric codes, Math. USSR-lvz. \textbf{21} (1) (1983) 75-91.
\bibitem{G.hso}  G. Guo, R. Li, Y. Liu, Application of Hermitian self-orthogonal GRS codes to some quantum MDS codes, Finite Fields Appl. \textbf{76} (2021) 101901.

\bibitem{F.qso}  F. Hernando, G. McGuire, F. Monserrat, J.J. Moyano-Fernández, Quantum codes from a new construction of self-orthogonal algebraic geometry codes, Quantum Inf. Process. \textbf{19} (4) (2020) 117.

\bibitem{jinlinfei.nmds}  L. Jin, H. Kan, Self-dual near MDS codes from Elliptic curves, IEEE Trans. Inf. Theory \textbf{65} (4) (2019) 2166-2170.
\bibitem{jinlinfei.hso} L. Jin, S. Ling, J. Luo, C. Xing, Application of classical Hermitian self-orthogonal MDS Codes to quantum MDS codes, IEEE Trans. Inf. Theory \textbf{56} (8) (2010) 4735-4740.
\bibitem{jinlinfei.agc} L. Jin,  C. Xing, Euclidean and Hermitian self-orthogonal algebraic geometry codes and their application to quantum codes, IEEE Trans. Inf. Theory \textbf{58} (8) (2012) 5484-5489.

\bibitem{liyang.gso} Y. Li, Y. Su, S. Zhu, S. Li, M. Shi, Several classes of MDS Galois self-orthogonal codes and related applications, Finite Fields Appl. \textbf{91} (2023) 102267.
\bibitem{liyang.MDS} Y. Li, S. Zhu, P. Li, On MDS codes with Galois hulls of arbitrary dimensions, Cryptogr. Commun. \textbf{15} (3) (2023) 565-587.
\bibitem{liyang1}Y. Li, S. Zhu, Linear codes of larger lengths with Galois hulls of arbitrary dimensions and related entanglement-assisted quantum error-correcting codes, Discrete Math. \textbf{347} (2) (2024) 113760.
\bibitem{32 th2.25}  R. Lidl, H. Niederreite, Introduction to Finite Fields and Their Applications, Cambridge University Press, Cambridge, 1994.
\bibitem{Gd1}  X. Liu, Y. Fan, H. Liu, Galois LCD codes over finite fields, Finite Fields Appl. \textbf{49} (1) (2018) 227-242.
\bibitem{Goppa.2023} J. Liu, L. Lin, Galois hulls of special Goppa codes and related codes with application to EAQECCs, Finite Fields Appl. \textbf{88} (2023) 102183.

\bibitem{quantum.2}  X. Liu, H. Liu, L. Yu, New EAQEC codes constructed from Galois LCD codes, Quantum Inf.Process. \textbf{19} (1) (2020) 1-15.
\bibitem{Gd2}  H. Liu, X. Pan, Galois hulls of linear codes over finite fields, Des. Codes Cryptogr. \textbf{88} (2) (2020) 241-255.
\bibitem{quantum.4} X. Liu, L. Yu, P. Hu, New entanglement-assisted quantum codes from $k$-Galois dual codes, Finite Fields Appl. \textbf{55} (2019) 21-32.

\bibitem{con.4}  J. Mi, X. Cao, Constructing MDS Galois self-dual constacyclic codes over finite fields, Discrete Math. \textbf{344} (6) (2021) 112388.
\bibitem{Castle curves}  C. Munuera, A. Sepúlveda, F. Torres, Castle curves and codes, Adv. Math. Commun. \textbf{3} (4) (2009) 399-408.


\bibitem{Q} L. Qian, X. Cao, X. Wu, MDS codes with $\ell$-Galois hulls of arbitrary dimensions, Des. Codes Cryptogr. (2024). https://doi.org/10.1007/s10623-024-01371-4.
\bibitem{con.1} A. Sahni, P.T. Sehgal, Enumeration of self-dual and self-orthogonal negacyclic codes over finite fields, Adv. Math. Commun. \textbf{9} (2015) 437-447.
\bibitem{con.2} A. Sharma, V. Chauhan, Skew multi-twisted codes over finite fields and their Galois duals, Finite Fields Appl. \textbf{59} (2019) 297-334.
\bibitem{forms}  M. Shi, Y.J. Choie, A. Sharma, P. Solé, Codes and Modular Forms, A Dictionary, World Scientific, Singapore, 2020.
\bibitem{sok.2021 sd} L. Sok, New families of self-dual codes. Des. Codes Cryptogr. \textbf{89} (2021) 823-841.
\bibitem{sok.2021 hso} L. Sok, M.F. Ezerman, S. Ling, Four new families of quantum stabilizer codes from Hermitian self-orthogonal MDS codes, In: 2023 12th International Symposium on Topics in Coding (ISTC), Brest, France, (2023), pp. 1-5.
\bibitem{quantum.5} A.M. Steane, Error correcting codes in quantum theory, Phys. Rev. Lett. \textbf{77} (5) (1996) 793-797.
\bibitem{H.so} H. Stichtenoth, Transitive and self-dual codes attaining the Tsfasman-Vladut-Zink bound, IEEE Trans. Inf. Theory \textbf{52} (5) (2006) 2218-2224.
\bibitem{H.2009} H. Stichtenoth, Algebraic Function Fields and Codes, Berlin, Germany: Springer-Verlag, 2009.



\bibitem{GVb} M.A. Tsfasman, S.G. Vlădutx, T. Zink, Modular curves, Shimura curves and AG codes, better than Varshamov-Gilbert bound, Math. Nachr. \textbf{109} (1982) 21-28.

\bibitem{Goppa.2022} Y. Wu, C. Li, S. Yang, New Galois hulls of generalized Reed-Solomon codes, Finite Fields Appl. \textbf{83} (2022) 102084.
\end{footnotesize}
\end{thebibliography}
\end{document}